\newtheorem{definition}{Definition}[section]
\newtheorem{theorem}[definition]{Theorem}
\newtheorem{proposition}[definition]{Proposition}
\newtheorem{corollary}[definition]{Corollary}
\title{A Disproof of Large Language Model Consciousness: The Necessity of Continual Learning for Consciousness}
\author[1]{Erik Hoel\thanks{\texttt{erik@bicameral-labs.org}}}
\affil[1]{Bicameral Labs, MA, USA}
\begin{document}

\maketitle

\section{Abstract}
Scientific theories of consciousness should be falsifiable and non-trivial. Recent research has given us formal tools to analyze these requirements of falsifiability and non-triviality for theories of consciousness. Surprisingly, many contemporary theories of consciousness fail to pass this bar, including theories based on causal structure but also (as I demonstrate) theories based on function. Herein, I show these requirements of falsifiability and non-triviality especially constrain the potential consciousness of contemporary Large Language Models (LLMs) because of their proximity to systems that are equivalent to LLMs in terms of input/output function; yet, for these functionally equivalent systems, there cannot be \textit{any} falsifiable and non-trivial theory of consciousness that judges them conscious. This forms the basis of a disproof of contemporary LLM consciousness. I then show a positive result, which is that theories of consciousness based on (or requiring) continual learning do satisfy the stringent formal constraints for a theory of consciousness in humans. Intriguingly, this work supports a hypothesis: If continual learning is linked to consciousness in humans, the current limitations of LLMs (which do not continually learn) are intimately tied to their lack of consciousness.

\section{Constraining theories of consciousness to testable theories}

In the past decade, a subfield of consciousness research has taken a ``structural turn'' and become focused on mathematics and formalisms \cite{kleiner2020mathematical,kleiner2024towards}. Much of this is the influence of Integrated Information Theory (IIT) \cite{albantakis2023integrated}, since it is expressed in a mathematically formalized way \cite{barbosa2020measure,kleiner2021mathematical}. In turn, many of the criticisms of IIT have depended on its detailed particulars and whether it is scientifically testable \cite{bayne2018axiomatic,fleming2023integrated}. These debates have led to increased interest in defining minimal ``toy models'' of consciousness \cite{hoel2024world,albantakis2025utility} and identifying what requirements might constrain theories of consciousness in terms of being testable (such as being falsifiable).

This structural turn gives rise to a novel approach: \textit{Can a theory of consciousness be so strongly constrained by requirements for testability that its overall nature can be deduced?}

Increasingly, such an approach seems a necessity. Currently, the field of consciousness research has hundreds of existing theories \cite{seth2022theories,kuhn2024landscape}, and no agreed-upon way to empirically distinguish them. The search for the neural correlates of consciousness \cite{crick1990towards,koch2016neural} has revealed that even tightly-controlled and well-funded adversarial collaborations fail \cite{cogitate2025adversarial}, such as the recent head-to-head comparison of IIT and Global Neuronal Workspace Theory \cite{mashour2020conscious, baars1993cognitive}, which led to accusations of pseudoscience \cite{fleming2023integrated, tononi2025consciousness}. If consciousness research is to make serious progress, it must winnow the wide field via constraints on properties of theories \cite{kleiner2021falsification, doerig2021hard}. In this, scientists would act like artists drawing the negative space around consciousness to see its outline clearly.

A further reason to pursue a testability-first approach is that we want questions about consciousness that the contemporary state of consciousness research cannot answer yet. Most prominently, the question whether or not contemporary LLMs (like ChatGPT or Claude or Gemini) are conscious has become suddenly critical. There are major risks associated with getting this question wrong. Assigning consciousness where there is none has a myriad of risks, which include increasing the risk of AI psychosis, overestimation of LLM capabilities, inappropriate practices or regulation, and misleading scientific beliefs about human consciousness \cite{seth2024conscious, morrin2025delusions}. On the other hand, if contemporary LLMs were conscious they could be considered moral patients \cite{sebo2025moral} and be due considerations like conversational ``exit rights'' \cite{anthropic2025exit} as well as other protections against mistreatment.

It is unlikely that empirical data from LLMs alone can answer the question of their consciousness, given how prone they are to confabulation and prompt-sensitivity \cite{ji2023survey}. And while there is a small subfield studying LLM ``introspection'' \cite{binder2024looking}, the best evidence for this introspection is in the form of statistically uncommon anomaly detection of injected concepts, for which there could be shallow mechanistic explanations \cite{anthropic2025introspection}. 

Therefore, most have assumed that answering the question of LLM consciousness would require scientific consensus around a particular theory of consciousness. However, I show that this is not necessary. Instead, it is possible to prove that there is \textit{no} theory of consciousness that is non-trivial and falsifiable that \textit{could} exist for (at minimum) baseline LLMs. This represents much faster progress than trying to assign assumption-dependent probabilities of consciousness, or trying to apply some set of the many (often contradictory) theories that do currently exist \cite{butlin2023consciousness}.

This road to a disproof of LLM consciousness begins with one particularly influential criticism of IIT: The ``Unfolding Argument'' by Doerig et al. \cite{doerig2019unfolding} has led to debate in the literature on how to falsify theories of consciousness. Specifically, the argument is based on the universal approximation theorem \cite{hornik1989multilayer}, and argues that any given Recurrent Neural Network (RNN) can be ``unfolded'' into some functionally equivalent Feedforward Neural Network (FNN). This creates a pathological scientific situation for theories like IIT, since any given RNN might have a certain degree of integrated information, yet its ``twin'' FNN would have zero integrated information. This is despite the two networks sharing the same input/output (I/O).

Later work by Johannes Kleiner and myself \cite{kleiner2021falsification} showed that unfolding was a specific case of the independence between what a theory uses to make predictions about consciousness (like the integrated information of IIT) and how a scientist infers the actual states of conscious experiences during experiments (via things like report or behavior or, more broadly, any I/O). We proposed a framework wherein alternative systems (of any kind) can be ``substituted'' in during empirical testing while holding I/O fixed, which reveals pathologies around falsifiability (thus, the "Substitution Argument").

Additionally, we pointed out that, instead of independence, there is a risk of having \textit{too much} dependency between a theory's predictions and a scientist's inferences. This would create theories of consciousness that are trivial. In other words, there are two ``horns'' for theories of consciousness: the Substitution Argument on one side, and triviality on the other. The dual-horned ``Kleiner-Hoel dilemma'' established in \cite{kleiner2021falsification} creates only a small space through which to pass; theories of consciousness are ``caught between Scylla and Charybdis.''

According to the Substitution Argument, theories like IIT \cite{tononi2012integrated} or recurrent processing theories \cite{lamme2006towards} are a priori falsified by specific substitutions like unfolded neural networks \cite{doerig2019unfolding}. Meanwhile, theories like behaviorism or functionalism based solely on I/O are trivial, because predictions of any such theory can never differ from inferences about consciousness from experiments (which are also based on I/O). E.g., if a theory of consciousness is that consciousness is identical to behaving as if conscious, how could it ever be falsified? It gives only trivial scientific information.

This paper shows that the Kleiner-Hoel dilemma is particularly troublesome for LLMs, and gives a disproof of their consciousness.

First, I overview a formal framework for testing theories of consciousness \cite{kleiner2021falsification}, and demonstrate how the Kleiner-Hoel dilemma naturally arises from it, including a new proof based on Kolmogorov's notion of a ``shortest program'' for how even functionalist theories face challenges from substitutions (or, depending on how they're defined, fall into triviality).

Second, I show that the Kleiner-Hoel dilemma has especially strong force in LLMs, because LLMs are much closer than human brains in ``substitution space'' to I/O equivalent substitutions like lookup tables that non-trivial theories of consciousness can't possibly judge as conscious. That is, there is no theory of consciousness available for contemporary LLMs that would not fall on one of the horns of the Kleiner-Hoel dilemma. Therefore, there is no scientific theory of consciousness which could judge contemporary LLMs as conscious. Therefore, they are not conscious. 

The strength of this argument is that it is entirely agnostic to particular theories of consciousness being true, nor assumes the ``specialness'' of biology, nor outright dismisses the possibility of AI consciousness in general, and has few philosophical commitments (other than that theories of consciousness should be falsifiable and non-trivial).

Third, this paper sketches a solution to the Kleiner-Hoel dilemma in humans by grounding a theory of consciousness not in causal structure like IIT, nor functional or dynamical features like Global Workspace Theory, but in continual learning. This fits well with the recent observation by O’Reilly-Shah et al. that a static FNN cannot act as a universal substitution for a plastic RNN \cite{o2025caveat}. However, those recent proofs only show that plastic RNNs are protected from the specifics of the Unfolding Argument---they do not fully distinguish whether the recurrence or plasticity is the cause, nor ask if this protection extends to other universal substitutions. Here I show that grounding a theory of consciousness in the broader context of continual learning explicitly avoids \textit{both} horns of the Kleiner-Hoel dilemma.

Assuming that a robust scientific theory of consciousness exists, until other cases that avoid the Kleiner-Hoel dilemma are identified, they are strongly constrained to not merely being confined to systems that can learn (like plastic RNNs), but more generally to involve continual learning itself. Such a result significantly winnows the field of consciousness research and illuminates the path toward less popular, but ultimately more scientific, theories of consciousness. It also implies that the vast majority of the search for the neural correlates of consciousness may be incommensurate, because testing static I/O behavior alone is not how one empirically tests for consciousness.

Intriguingly, this research indicates that one major reason deployed LLMs lack consciousness is their static structure, and that unlocking continual learning, already a significant target in AI research \cite{kudithipudi2022biological}, could possibly change this disproof. It also provides grounds to believe that the current cognitive limitations of LLMs in adaptation to novelty and generalization are based on their lack of consciousness, which is connected to their lack of continual learning.

\section{A Formal Framework for Falsifying Theories of Consciousness}

The line of reasoning here will make use of a previously developed framework for falsification of theories of consciousness \cite{kleiner2021falsification}. It is based on treating theories as mapping a space of possible experiences onto predictions about the system's workings. Its advantage is that it is abstract, general, and (relatively) simple and broadly follows common philosophical definitions of falsification \cite{popper2005logic,lakatos2014falsification}.

I leave much of the background logic and reasoning to previous work (see \cite{kleiner2021falsification} for those details). For our purposes what matters is that testing a theory of consciousness is based on comparing two functions: predictions and inferences. 

First, what are predictions? Predictions pick out a particular element of the space of possible experiences, $E$. The correspondence (here, for simplicity, referred to as a function) that picks out experiences is a mapping from experimental data about the system's workings, $O$ (such as neuroimaging data), to the space of experiences, $E$; this mapping is dubbed \textit{pred}, for ``prediction.''
\[
    pred: O \rightarrow E \:.
\]
Theories are defined by their unique methods of making predictions. However, predictions alone cannot falsify a theory. They need to be compared to empirical inferences about consciousness.

Usually, these inferences are based on verbal report or behavior. In a psychology lab, button presses to indicate what was flashed in a masking stimulus could be inference data; in another context, like in assessing LLM consciousness, the outputs of an artificial neural network could be inference data. Inferences are, essentially, empirical reasons to believe a system might be conscious \textit{irrespective} of any particular theory or class of theories. Perhaps the clearest inference would be an explicit claim by the system to consciousness itself, but  others might be, e.g., correctly reporting sensory inputs, responding to pain, etc.

Inferences can be represented by a second function, which maps between the data generated by observations about the behavior of a system (like its I/O) and the space of experiences $E$:

\[
inf: O \rightarrow E \:.
\]
The combination of \textit{inf} and \textit{pred} is visualized in Figure 1.

\begin{figure}[ht]
\centering
\begin{tikzpicture}[
    >=Stealth,
    every node/.style={font=\small},
    set/.style={ellipse, draw, thick, minimum width=3.8cm, minimum height=5cm, align=center},
    element/.style={circle, fill=black!25, inner sep=0pt, minimum size=8pt},
]
\node[set] (data) at (0,0) {};
\node[font=\bfseries, above=0.1cm of data.north] {Experimental Data};
\node[set, minimum width=3cm, minimum height=4cm] (exp) at (6.5,0) {};
\node[font=\bfseries, above=0.1cm of exp.north] {Experiences};
\node[element] (reports) at (1.2, 1.0) {};
\node[anchor=east, align=right, font=\footnotesize] at (1.0, 1.0) {Data from\\verbal reports};
\node[element] (neural) at (1.2, -0.2) {};
\node[anchor=east, align=right, font=\footnotesize] at (1.0, -0.2) {Data about\\system's workings};
\node[element] (E) at (6.3, 0.4) {};
\node[anchor=west, font=\itshape] at (6.55, 0.4) {$E$};
\draw[->, thick] (reports) to[bend left=15] 
    node[midway, above=2pt, font=\footnotesize] {inferences} 
    (E);
\draw[->, thick] (neural) to[bend right=10] 
    node[pos=0.4, below=2pt, font=\footnotesize] {predictions} 
    (E);

\end{tikzpicture}
\caption{A theory of consciousness is not falsified (sometimes called ``experimentally confirmed,'' although that language is often not appropriate) when inference from behavioral reports and prediction from neural data converge on the same experience from $E$.}
\label{fig:convergence}
\end{figure}
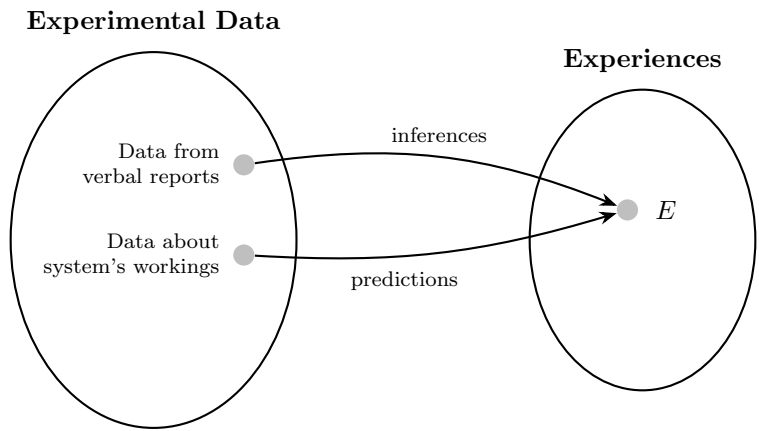

Obviously, this formal setup is an extremely idealized and abstracted version of testing a theory of consciousness (for a defense of the irrelevancy of other details, however, see \cite{kleiner2021falsification}).

It is important that a falsification framework not be overfitted or over-sensitive. For the purposes herein, I do not require that falsification triggers from just \textit{any} possible mismatch between \textit{pred} and \textit{inf}. Therefore, here I remain agnostic to whether the falsification framework introduced herein can be \textit{optionally} permissive about some occasional mismatches.

However, if a theory of consciousness predicts that extremely simple systems have degrees of consciousness greater than that of a human \cite{aaronson2014unconscious, tononi2014conscious}, such a radical mismatch could count as actual falsification---but \textit{only} in light of the inference that a system's behavioral or I/O simplicity is evidence against such a high degree of consciousness. Ideally, such inferences should always be made explicit when reasoning about consciousness through the lens of falsification. It is important to note that keeping assumptions about inferences as neutral as possible is important, or else testing a theory of consciousness would lose its desired empirical ``universality'' \cite{kanai2024toward} and become less robust.

\begin{figure}[ht]
\centering
\begin{tikzpicture}[
    >=Stealth,
    every node/.style={font=\small},
    set/.style={ellipse, draw, thick, minimum width=3.8cm, minimum height=5cm, align=center},
    element/.style={circle, fill=black!25, inner sep=0pt, minimum size=8pt},
]
\node[set] (data) at (0,0) {};
\node[font=\bfseries, above=0.1cm of data.north] {Experimental Data};
\node[set, minimum width=3cm, minimum height=4cm] (exp) at (6.5,0) {};
\node[font=\bfseries, above=0.1cm of exp.north] {Experiences};
\node[element] (reports) at (1.2, 1.0) {};
\node[anchor=east, align=right, font=\footnotesize] at (1.0, 1.0) {Data from\\verbal reports};
\node[element, fill=black!15] (neural_old) at (1.2, -0.2) {};
\node[element] (neural_new) at (1.2, -1.0) {};
\node[anchor=east, align=right, font=\footnotesize] at (1.0, -0.6) {Data about\\system's workings};
\draw[->, thick, dashed] (neural_old) -- (neural_new) node[midway, right, font=\footnotesize] {$S$};
\node[element, fill=black!15] (E_old) at (6.3, 0.4) {};
\node[anchor=west, font=\itshape, text=black!40] at (6.55, 0.4) {$E$};
\node[element] (E_new) at (6.3, -0.4) {};
\node[anchor=west, font=\itshape] at (6.55, -0.4) {$E'$};
\draw[->, thick] (reports) to[bend left=15] 
    node[midway, above=2pt, font=\footnotesize] {inferences} 
    (E_old);
\draw[->, thick, draw=black!30] (neural_old) to[bend right=10] 
    (E_old);
\draw[->, thick] (neural_new) to[bend right=15] 
    node[pos=0.4, below=2pt, font=\footnotesize] {predictions} 
    (E_new);
\draw[->, thick, dashed] (E_old) -- (E_new);

\end{tikzpicture}
\caption{The Substitution Argument: Changing the system (and thus significantly changing the prediction function), either by transformation or swapping in a different system, all while holding inference data constant, should always be possible as long as inferences and predictions are completely independent. Here, following a substitution, the predicted experience shifts from $E$ to $E'$, but the inferred experience remains $E$, creating the mismatches that falsify the theory.}
\label{fig:substitution}
\end{figure}
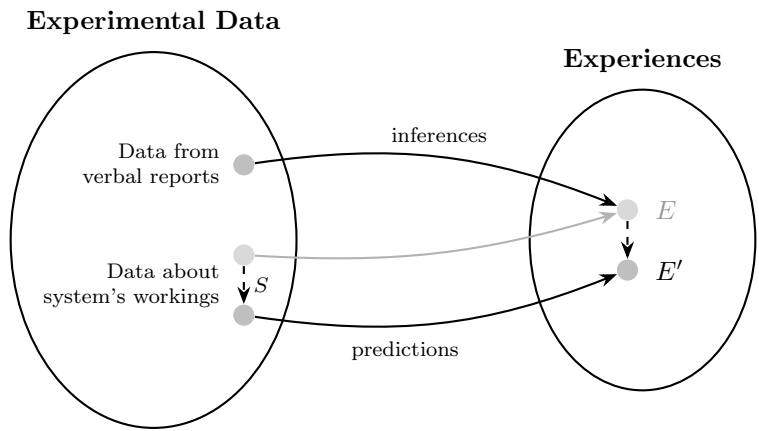

Importantly, if \textit{inf} is held constant, but \textit{pred} is varied significantly, then this will create \textit{consistent} divergence (mismatches) across the entirety (or vast majority) of a theory's predictions. This is pathological behavior and should necessarily falsify a given theory, as now \textit{every} prediction is wrong (or the vast majority are wrong) for a system with identical (or, again, extremely similar) inferences. Specifically, in \cite{kleiner2021falsification} we took the approach that a ``universal substitution'' (of one system for another) in an experimental setup (within sensible allowances), but in which inferences do not change---but all predictions do change---would completely falsify a particular theory. We called such a possibility a ``universal substitution'' and confined the analyses to such cases in order to avoid more debatable instances. E.g.,  it is unclear to what degree ``partial substitutions'' triggering mismatches should count as falsification (but again, this question can be set aside here). However, what is clear is that universal substitutions stem from predictions and inferences being totally independent.

What universal substitutions provide is a means of \textit{a priori} falsification. Essentially, substituting them into an experimental setup for the original would radically alter or complicate the predictions of a theory of consciousness, while the outward and immediate inferences (such as report or output) would not change at all (other than irrelevant considerations for testing a theory of consciousness, like say, exact timing of reports). 

There are two main classes of universal substitutions: causal-structure substitutions and algorithmic substitutions.

\begin{itemize}
    \item     Causal-structure substitutions accomplish a particular I/O function via a different system with different internal workings. This includes lookup tables that implement the same I/O function of a system directly \cite{searle1980minds}, or static FNNs that implement the same I/O (or approximate it arbitrarily well) \cite{doerig2019unfolding}. This might include potentially more exotic options, like substitutions based on the AIXI model of universal intelligence \cite{hutter2000theory} or even cases of a brain in quantum superposition (a ``Schrödinger's Zombie'') \cite{brown2019schrodingers}.
    \item Algorithmic substitutions accomplish any particular I/O function via a different algorithm or program with different workings. For instance, there are various ways to sort (bubble sort, merge, etc.) that can all transition an unsorted list to a sorted one.
    
\end{itemize}

It had been hoped that issues of falsification like the Unfolding Argument would only ever impact ``causal structure'' theories like IIT \cite{doerig2019unfolding}. However, once generalized as the Substitution Argument, there did not seem any particular protection for functionalist or computational theories \cite{kleiner2021falsification}. This is likely because the definition of what is functionalist and what is causal structure is difficult to define (e.g., functionalism may require a ``causal topology'' \cite{chalmers2011computational}). In what follows I show precisely how the Kleiner-Hoel dilemma impacts functionalist and computational theories of consciousness too. Instead of substituting systems, we can substitute programs. Specifically, the shortest program.

\begin{theorem}[Kolmogorov Substitution Theorem]

Following \cite{kolmogorov1965three}, for any computable function $f$ there exists a shortest program $K(f)$ implementing $f$ (even if it is not known how to identify it). This $K(f)$ can always be used as a universal substitution for any other program $N$ implementing $f$, and so will cause mismatches and falsify a given theory of consciousness that targets the program $N$ as the basis of its predictions. Alternatively, if a theory of consciousness has the same predictions for all programs implementing $f$, including $K(f)$, then the predictions are entirely determined by $f$ itself, not the original program implementing it, since $K(f)$ is solely a function of $f$. In that case, predictions and inferences have the same source and cannot vary, leading to triviality.

\end{theorem}

\begin{proof}
\textbf{Step 1.} For any computable $f$, $K(f)$, the shortest program implementing $f$, exists, and is itself solely a function of $f$.

\textbf{Step 2.} Let $N$ be any program computing $f$ that is the target of the predictions of some theory of consciousness. $N$ and $K(f)$ share I/O (defined by $f$).

\textbf{Step 3.} Any theory that assigns different predictions about consciousness to $N$ (the actual program under question) and $K(f)$ (its universal substitution) is falsified, since this implies empirical mismatches.

\textbf{Step 4.} Any theory that always shares predictions between $N$ and $K(f)$ is basing those predictions on properties of $K(f)$ alone. Since predictions based on $K(f)$ can be derived entirely without any knowledge of program $N$, this renders program $N$ irrelevant to the given theory of consciousness. Therefore, any such theory collapses into triviality.
\end{proof}

Many theories could be shown to have problems with falsification in this light. Consider just one: Higher Order Thought (HOT) theory, a popular theory of consciousness \cite{brown2019understanding}. Any exact ranking or rating of the different representations should differ in its predictions about $N$ vs. $K(f)$. But if somehow not, then $N$ is irrelevant and only the I/O and its subsequently determined $K(f)$ is needed for the theory's predictions, rendering it trivial. The same goes for almost all theories of consciousness that are more focused on functionalist aspects, like ``representation'' or ``computation.'' As a concrete example, it has been argued that ``Perceiver'' architecture of a neural network satisfies the properties of Global Workspace Theory \cite{juliani2022perceiver, dehaene1998neuronal}. This architecture could be replaced by a lookup table, a single-hidden-layer FNN, or $K(f)$, all while holding I/O constant, and all involving substantial variations in predictions (such as a lack of consciousness).

The overall structure of this proof also demonstrates how the Kleiner-Hoel dilemma has two horns. The first horn is the Substitution Argument: If substituting in the shortest program that computes the function $f$ makes a theory's predictions change, then it creates mismatches, enacting a kind of ``a priori falsification.'' However, on the other hand, if the predictions of a theory of consciousness do \textit{not} change, then they must be strictly dependent on the experimental inferences observed during testing a theory of consciousness, rendering that theory trivial (here, this can be seen by any details of the original program $N$ \textit{no longer mattering at all} if a theory's predictions are always the same for $K(f)$ too). This would be a case of ``strict dependency'' wherein predictions either mimic, stem from, or can be entirely derived by the inferences (in the terminology of Theorem 3.1, inferences would be entirely determined by $f$).

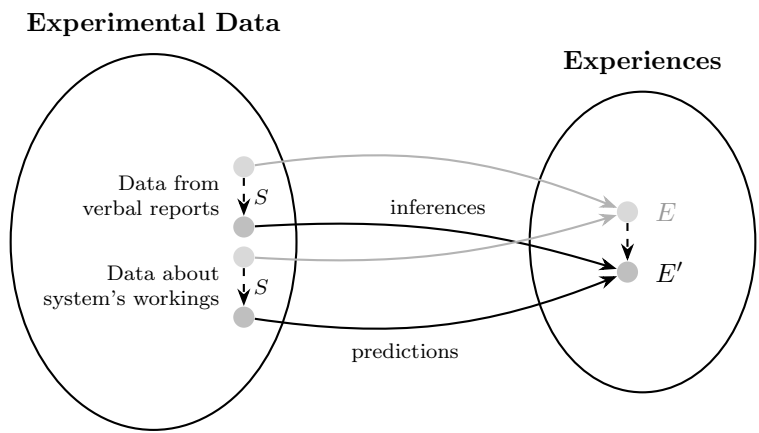
\begin{figure}[ht]
\centering
\begin{tikzpicture}[
    >=Stealth,
    every node/.style={font=\small},
    set/.style={ellipse, draw, thick, minimum width=3.8cm, minimum height=5cm, align=center},
    element/.style={circle, fill=black!25, inner sep=0pt, minimum size=8pt},
]
\node[set] (data) at (0,0) {};
\node[font=\bfseries, above=0.1cm of data.north] {Experimental Data};
\node[set, minimum width=3cm, minimum height=4cm] (exp) at (6.5,0) {};
\node[font=\bfseries, above=0.1cm of exp.north] {Experiences};
\node[element, fill=black!15] (reports_old) at (1.2, 1.0) {};
\node[element] (reports_new) at (1.2, 0.2) {};
\node[anchor=east, align=right, font=\footnotesize] at (1.0, 0.6) {Data from\\verbal reports};
\draw[->, thick, dashed] (reports_old) -- (reports_new) node[midway, right, font=\footnotesize] {$S$};
\node[element, fill=black!15] (neural_old) at (1.2, -0.2) {};
\node[element] (neural_new) at (1.2, -1.0) {};
\node[anchor=east, align=right, font=\footnotesize] at (1.0, -0.6) {Data about\\system's workings};
\draw[->, thick, dashed] (neural_old) -- (neural_new) node[midway, right, font=\footnotesize] {$S$};
\node[element, fill=black!15] (E_old) at (6.3, 0.4) {};
\node[anchor=west, font=\itshape, text=black!40] at (6.55, 0.4) {$E$};
\node[element] (E_new) at (6.3, -0.4) {};
\node[anchor=west, font=\itshape] at (6.55, -0.4) {$E'$};
\draw[->, thick, draw=black!30] (reports_old) to[bend left=15] 
    (E_old);
\draw[->, thick] (reports_new) to[bend left=10] 
    node[midway, above=2pt, font=\footnotesize] {inferences} 
    (E_new);
\draw[->, thick, draw=black!30] (neural_old) to[bend right=10] 
    (E_old);
\draw[->, thick] (neural_new) to[bend right=15] 
    node[pos=0.4, below=2pt, font=\footnotesize] {predictions} 
    (E_new);
\draw[->, thick, dashed] (E_old) -- (E_new);

\end{tikzpicture}
\caption{Strict dependency: when inference and prediction data are strictly dependent, any substitution $S$ changes both simultaneously to the same degree. Since inferred and predicted experiences always shift together, they always match, rendering a theory unfalsifiable.}
\label{fig:strict-dependency}
\end{figure}

Here and throughout, such strictly-dependent-on-inferences theories of consciousness are referred to as ``trivial.'' E.g., a widely-held (but trivial) theory of consciousness is behaviorism, such that if an entity acts conscious, then it is conscious. But such a theory cannot be falsified, since its predictions could never be any different than experimental inferences (as they are simply a restatement of them) and so such a theory contains no scientific information, and so it is trivial. While the term ``functionalism'' has many different meanings, for the same reason that behaviorist theories of consciousness are trivial, so is any theory of consciousness based \textit{solely} on I/O. Arguably, this does significantly affect plenty of existing theories, which turn out to be simply behaviorist theories with multiple steps. E.g., when discussing Global Neuronal Workspace Theory (GNWT), Daniel Dennett once wrote that:

``... theorists must resist the temptation to see global accessibility as the \textit{cause} of consciousness (as if consciousness were some other, further condition); rather, it \textit{is} consciousness'' \cite{dennett2001we}.

If Dennett's ideas were true, this would make GNWT a trivial theory of consciousness, since it would be unfalsifiable, because the predictions and inferences strictly depend on the same source. Consciousness just \textit{is} the information that is globally accessible for report and behavior, and reports and behaviors also just \textit{are} different expressions of that same information. So there is no situation wherein they could diverge. And therefore, the theory is unfalsifiable and trivial (one could see how it gives us minimal scientific information even if it were true---trivial theories are fundamentally unsatisfying). 

Often, theories based in the ``theater'' metaphor have a tendency to fall into strict dependency (sometimes ``global workspaces'' are described in this way too \cite{baars1997theater}). Theories based on identifying only why humans might \textit{claim} to be conscious (rather than why they \textit{are} conscious) are at high risk of triviality (e.g., theories like the Attention Schema Theory of consciousness \cite{graziano2020consciousness}, or any other theory focused on the ``meta-problem of consciousness'' as an explanandum \cite{chalmers2018meta}).

A significant amount of confusion about the Kleiner-Hoel dilemma was simply missing that the dilemma has this second horn of strict dependency. E.g., in one of the only cases in the literature of attempting to confront the problem head-on, the criticism ended up being that the substitution argument could not possibly be true, because it did not factor in needing to change the predictions for functionalist theories of consciousness after substitutions \cite{ganesh2020cwars}. But in arguing that predictions must change just the same way as inferences following a substitution, the paper proposes a trivial theory of consciousness (a kind of global I/O functionalism) that falls on the second horn of the Kleiner-Hoel dilemma.

In the next section, I show that the Kleiner-Hoel dilemma provides a particularly strong disproof of LLM consciousness. After that, I present a case of avoiding the dilemma in humans but not LLMs, indicating that it disproves consciousness in LLMs but not in humans.

\section{The Proximity Argument}

\begin{definition}[Trivial Theories]
A theory of consciousness, $T$, is trivial if there is strict dependency between its predictions and inferences. That is, if the prediction function, $pred$, and inference function, $inf$, are identical (or if the two are so approximate there can only be minimal scientific information gained for any possible experiment). An example of triviality is if $inf$ cannot be varied without $pred$ also being varied, or both stem from the same source of data, then such a theory is unfalsifiable since mismatches are impossible.
\end{definition}

Here, I assume that trivial theories of consciousness must be false (rather than just trivial), since otherwise there is no scientifically informative theory of consciousness. This assumption makes the language and conclusions easier to follow throughout (e.g., instead of writing ``a system for which there are no non-trivial theories of consciousness available'' I can say ``a non-conscious system''); however, all the conclusions would still hold but with this sort of replacement language if this assumption (which is just that a scientifically-informative theory of consciousness actually exists) were not true. See subsection 4.1 for relevant discussion about rejecting this and instead accepting trivial theories as true. 

\begin{definition}[Non-conscious Systems]
Following Definition 4.1 of Trivial Theories: A system $s$ is a non-conscious system if, for any theory of consciousness $T$, whenever $T$ predicts $s$ is conscious, $T$ must necessarily be trivial.
\end{definition}

A lookup table offers an archetypal example of a non-conscious system under Definition 4.2. Any lookup table $L$ simply implements an I/O function $f: I \rightarrow O$. Imagine a theory $T$ that predicts $L$ has non-trivial consciousness. What could $T$ base its prediction of consciousness on? $L$ is not internally complex, lacks all dynamics, is entirely static, has no memory capabilities, and there is no information flow across any parts or modules. Certainly, a theory cannot make meaningful predictions of consciousness based on merely the fact that some set of IF-THEN statements exist. The sole relevant property for a theory of consciousness pertaining to $L$ is the I/O of the function $f$ itself. However, that's the same function that determines all the experimental inference data! Therefore, any $T$ predicting $L$ is conscious must base this prediction solely on $f$, which is identical to the basis of inference, creating strict dependency. Thus any given $T$ is trivial. Thus, $L$ is \textit{necessarily} a non-conscious system (as long as, per Definition 4.1, non-trivial theories of consciousness exist). 

\begin{definition}[Substitution Distance]
For a pair of systems, $s$ and $s'$, that share identical I/O (or approximately equivalent I/O, wherein this could describe behavior or responding to text, etc.), the substitution distance $d_S(s, s')$ is determined by the space of properties $\mathcal{P}$ that differ between $s$ and $s'$.
\end{definition}

As an example of Definition 4.3, consider the well-detailed case of unfolding a recurrent artificial neural network (RNN) into a feedforward artificial neural network (FNN) that implements the same I/O function, as in \cite{doerig2019unfolding}. In such an unfolding, many properties are lost. E.g., while a single-hidden-layer FNN can approximate any continuous function to arbitrary precision \cite{cybenko1989approximation, hornik1989multilayer}, it lacks recurrence, temporal dynamics, higher-order structure, broadcasting, depth, etc. On the other hand, some things remain unchanged, like its substrate and activation functions (i.e., it is still an artificial neural network with individual artificial neurons and connections, running on a computer, etc.). The scope of these lost properties defines the \textit{substitution distance} (Definition 4.3).

\begin{definition}[Consciousness-Relevant Properties]
For a system $s$ with substitution distance $d_S(s, s')$ to a non-conscious system $s'$, let $\mathcal{P}$ be the set of properties lost following this substitution. A theory of consciousness, $T$, is non-trivial only if $T$'s predictions are sensitive to (or grounded in) some subset of $\mathcal{P}$, since by Definition 4.2, for non-conscious systems only trivial theories are possible. 
\end{definition}

This follows naturally from the previous definitions: given a pair of systems, and one is a non-conscious substitution, then for a theory of consciousness to be non-trivial it must somehow differentiate the two. If it is a non-trivial theory of consciousness, it cannot predict that the non-conscious system is conscious, so therefore the difference must lie in the properties lost compared to the non-conscious substitution.

\begin{proposition}[Proximity Principle]
The larger the substitution distance $d_S(s, s')$ between a system $s$ and a non-conscious system $s'$, the larger the space of properties $\mathcal{P}$ that could (at least potentially) ground non-trivial predictions. Conversely, if $d_S(s, s')$ is small, and so if $s$ and $s'$ differ in only a few properties, then there is a small space of properties that could ground non-trivial predictions. If none of the differing properties can ground non-trivial predictions, then $s$ is also a non-conscious system.
\end{proposition}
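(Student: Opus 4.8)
The plan is to split the statement into its two comparative clauses, which I regard as monotonicity facts that fall out almost immediately from Definitions 4.3 and 4.4, and its final substantive clause, which carries the real content and which I would prove by feeding the contrapositive of Definition 4.4 into Definition 4.2. For the comparative clauses: by Definition 4.3 (and the surrounding remark that ``the number and scope of these lost properties defines the \ldots{} substitution distance'') the distance $d_S(s,s')$ is an increasing function of the set $\mathcal{P}$ of properties on which $s$ and $s'$ differ. By Definition 4.4, any \emph{non-trivial} theory that calls $s$ conscious while its input/output twin $s'$ is not must be grounded in some member of $\mathcal{P}$; hence the collection of properties that could potentially ground a non-trivial verdict on $s$ is a subset of $\mathcal{P}$, and this collection is inclusion-monotone in $\mathcal{P}$. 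Enlarging $\mathcal{P}$ (larger $d_S$) can therefore only enlarge the pool of admissible grounds, while shrinking $\mathcal{P}$ (smaller $d_S$) leaves strictly ``less room,'' which is the converse clause verbatim.

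For the substantive clause I would discharge the universal condition in Definition 4.2 directly. Fix $s$, a non-conscious system $s'$ sharing the input/output of $s$ (so $d_S(s,s')$ is defined), and let $\mathcal{P}$ be the properties lost in the substitution $s \mapsto s'$; assume the hypothesis that no property in $\mathcal{P}$ can ground non-trivial predictions. Let $T$ be an arbitrary theory of consciousness with $T$ predicting that $s$ is conscious. Since $s$ and $s'$ agree on all input/output behavior and $s'$ is non-conscious, $T$ cannot deliver the same verdict for $s'$ unless $T$ is trivial; so $T$'s verdict on $s$ must turn on some feature on which $s$ and $s'$ disagree — any feature they share is screened off from a \emph{differing} verdict and so cannot account for it — i.e., on some member of $\mathcal{P}$, which is exactly the necessary condition recorded in Definition 4.4. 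But by hypothesis no member of $\mathcal{P}$ can ground a non-trivial prediction, so $T$ is trivial. As $T$ was arbitrary, every theory predicting $s$ conscious is trivial, whence by Definition 4.2 $s$ is a non-conscious system, completing the clause (and explaining the ``also'': non-consciousness has propagated from $s'$ to $s$ across the short substitution distance, which is precisely the lever needed for the later LLM argument).

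I expect the logical skeleton above to be short and the real work to be in pinning down the informal predicate ``$\mathcal{P}$ can ground non-trivial predictions'' precisely enough that the screening step is airtight. Concretely I would want to (i) fix a reading of ``grounded in $\mathcal{P}$'' as: the prediction function, restricted to the input/output class containing $s$ and $s'$, factors through the values of the properties in $\mathcal{P}$ — this is what licenses treating every feature outside $\mathcal{P}$ as constant across the pair, hence irrelevant to any difference in verdict; (ii) use Definition 4.4 only in the contrapositive direction (no $\mathcal{P}$-grounding $\Rightarrow$ trivial), since it supplies only a \emph{necessary} condition for non-triviality and never licenses inferring non-triviality from mere $\mathcal{P}$-sensitivity; and (iii) read ``cannot ground non-trivial predictions'' at the level of the whole space of theories rather than a single theory, so that the ``for all $T$'' of Definition 4.2 is genuinely discharged. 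A minor subtlety worth flagging is the mild asymmetry between ``properties that differ'' (Definition 4.3) and ``properties lost under substitution'' (Definition 4.4); the screening argument really only needs the symmetric difference, so I would state $\mathcal{P}$ that way and remark that the two readings coincide whenever the substitution is a genuine simplification, as it is in every case of interest (lookup tables, static FNNs, Turing machines).
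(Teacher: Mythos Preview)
Your proposal is correct, and in fact more thorough than the paper itself: the paper offers \emph{no} proof of Proposition~4.5 at all. It treats the Proximity Principle as an immediate consequence of Definitions~4.2--4.4 and follows the statement only with a one-sentence explanatory paraphrase (``given two different systems \ldots{} in the latter case there are much fewer properties that could \emph{make a difference}''). Your explicit route --- monotonicity of the admissible grounding set in $\mathcal{P}$ via Definitions~4.3 and~4.4, then discharging Definition~4.2 via the contrapositive of Definition~4.4 --- is exactly the derivation the paper leaves implicit, and your careful remarks on reading ``grounded in $\mathcal{P}$'' and on the differ/lost asymmetry go beyond anything the paper supplies.
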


According to the Proximity Principle, given two different systems, one with a large ``substitution distance'' from a non-conscious system, and the other with a small ``substitution distance'' from a non-conscious system, then in the latter case there are much fewer properties that could \textit{make a difference} when it comes to predictions of consciousness.

\begin{theorem}[Constraint Theorem]
Any non-trivial theory of consciousness that predicts contemporary LLMs are conscious must base its predictions entirely on properties constrained to the substitution distance between a lookup table and a static single-hidden-layer FNN, combined with the substitution distance between a static single-hidden-layer FNN and an LLM.
\end{theorem}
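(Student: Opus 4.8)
The plan is to build a two-link chain of universal substitutions, from an LLM to a single-hidden-layer FNN and from that FNN to a lookup table, and then combine the lookup table's status as a non-conscious system (Definition 4.2) with the Consciousness-Relevant Properties criterion (Definition 4.4) and the Proximity Principle. First I would fix the first link: a contemporary LLM realizes a deterministic (or fixed-seed) input/output map $f_{\mathrm{LLM}}: I \to O$ on a bounded, finite domain --- token strings up to the context window over a finite vocabulary --- so by the universal approximation theorem \cite{cybenko1989approximation,hornik1989multilayer} there is a single-hidden-layer FNN realizing the same map to arbitrary precision, and at the finite numerical precision of real hardware this is exact realization. Write $\mathcal{P}_1 = d_S(\mathrm{FNN},\mathrm{LLM})$ for the properties lost in this step. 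Next I would fix the second link by the classical move of \cite{searle1980minds, doerig2019unfolding}: a static FNN on a bounded finite domain can itself be substituted by a lookup table $L$ that simply tabulates its outputs, and I write $\mathcal{P}_2 = d_S(L,\mathrm{FNN})$ for the properties lost there.

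With the chain in place, I would invoke the argument following Definition 4.2 that $L$ is a non-conscious system: the only consciousness-relevant property of $L$ is $f_{\mathrm{LLM}}$ itself, which is identical to the basis of all inference data, so any theory calling $L$ conscious is strictly dependent and hence trivial. The composite substitution from the LLM to $L$ is therefore a universal substitution (identical input/output within the stated allowances) onto a non-conscious system, so Definition 4.4 applies: any non-trivial $T$ predicting the LLM conscious must have predictions sensitive to some subset of the properties lost along this composite substitution. The key observation is that the properties lost in passing all the way from the LLM to $L$ are exactly $\mathcal{P}_1 \cup \mathcal{P}_2$ --- any property the LLM shares with $L$ is lost at neither step, and, since $L$ is non-conscious, no such shared property can ground non-trivial predictions (otherwise $T$ would be forced to judge $L$ conscious as well, contradicting non-triviality). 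Hence $T$'s predictions are grounded entirely within $\mathcal{P}_1 \cup \mathcal{P}_2$, which is the assertion; the Proximity Principle then records that this is a genuinely ``small'' space, with $\mathcal{P}_1$ the intra-network gap (recurrence, depth, attention, broadcasting, \dots) and $\mathcal{P}_2$ the table-to-network gap (weighted connections, activation functions, distributed computation).

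I expect the main obstacle to be justifying the word \emph{entirely} --- ruling out that $T$ could legitimately ground its verdict in some property lying outside this particular chain. That is precisely what the lookup-table argument forecloses, since such a property would have to be one the LLM shares with $L$ and strict dependency then bites; but care is needed that each link really is a genuine \emph{universal} substitution rather than a ``partial'' one, which is where the finiteness and boundedness assumptions (finite vocabulary, bounded context, finite hardware precision) do the real work and should be made explicit. A secondary, more conceptual point is that ``properties lost under substitution'' must be read broadly enough to include relational or historical properties --- training data, developmental trajectory, causal embedding --- for the bound to be exhaustive; I would note that the chaining argument is indifferent to this reading, since any extrinsic property not already in $\mathcal{P}_1 \cup \mathcal{P}_2$ is, by construction, still shared with $L$ and hence neutralized by the same strict-dependency move.
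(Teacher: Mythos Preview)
Your proposal is correct and follows essentially the same route as the paper: build the substitution chain $M \leftrightarrow N \leftrightarrow L$ via the universal approximation theorem and the finiteness of the LLM's input domain (fixed vocabulary, bounded context), establish that $L$ is non-conscious by Definition~4.2, and conclude via Definition~4.4 and the Proximity Principle that any non-trivial theory's predictions must be grounded in $d_S(M,N)\cup d_S(N,L)$. If anything, you are more explicit than the paper about the finiteness assumptions doing the real work and about why no property shared with $L$ can help, whereas the paper simply records the containment $d_S(M,L)\subseteq d_S(M,N)\cup d_S(N,L)$ and moves on.
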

\begin{proof}
\textbf{Step 1.} Let $M$ be a deployed static LLM, let $N$ be a static single-hidden-layer feedforward neural network implementing the same I/O function as $M$, and let $L$ be a lookup table also implementing the same I/O function.

\textbf{Step 2.} By the universal approximation theorem, any I/O function (like text responses) computed by $M$ can be approximated arbitrarily well by a sufficiently-wide static single-hidden-layer network $N$. By definition, any finite function computed by $N$ can in turn also be implemented as a lookup table, $L$.

\textbf{Step 3.} Therefore there exists a chain of universal substitutions preserving I/O:
\[L \leftrightarrow N \leftrightarrow M\]

\textbf{Step 4.} By Definition 4.2, $L$ is a non-conscious system. By the Proximity Principle (Proposition 4.5), any non-trivial theory $T$ predicting $M$ is conscious must base its predictions solely on properties within the space defined by $d_S(M, L) \subseteq d_S(M, N) \cup d_S(N, L)$.
\end{proof}

\begin{corollary}
If no property within this space constrained by this universal substitution chain can ground non-trivial predictions for a theory of consciousness, then LLMs are non-conscious systems.
\end{corollary}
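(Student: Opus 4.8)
The plan is to obtain the corollary as a near-immediate consequence of the Constraint Theorem (Theorem 4.6) together with Definitions 4.2 and 4.4, via contraposition. First I would fix an arbitrary theory of consciousness $T$ and suppose, toward establishing the needed implication, that $T$ predicts some contemporary LLM $M$ is conscious. The goal is then to show $T$ must be trivial; since $T$ is arbitrary, Definition 4.2 immediately yields that $M$ is a non-conscious system, which is exactly the conclusion.

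Next I would run the core argument. Suppose for contradiction that $T$ is non-trivial. By Theorem 4.6, the predictions of $T$ about $M$ must be based entirely on properties lying in the constrained space $d_S(M,L) \subseteq d_S(M,N) \cup d_S(N,L)$, where $N$ is a single-hidden-layer FNN and $L$ a lookup table, both matching $M$'s input/output; equivalently, by Definition 4.4, those predictions must be grounded in some subset $\mathcal{Q} \subseteq d_S(M,L)$. But the hypothesis of the corollary is precisely that no property within this constrained space — and hence no subset of it — can ground non-trivial predictions. So the predictions of the supposedly non-trivial $T$ are grounded solely in properties that cannot ground non-trivial predictions, a contradiction. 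Hence $T$ is trivial. Since every theory that calls $M$ conscious is thereby forced to be trivial, Definition 4.2 gives that $M$, and so contemporary LLMs generally, are non-conscious systems.

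The main obstacle is not computational but definitional: I need the relation ``the predictions of $T$ are grounded in a set of properties'' to behave monotonically and be closed downward under subsets, so that being grounded solely in properties drawn from a set none of whose subsets grounds a non-trivial prediction genuinely forces triviality. Concretely, I would make explicit — consistent with Definition 4.4's phrasing ``some subset of $\mathcal{P}$'' — that the corollary's hypothesis is to be read at the level of subsets of $d_S(M,L)$, closing off the escape route in which several individually-insufficient differing properties combine to ground a non-trivial prediction. With that reading fixed there is no gap: the chain $L \leftrightarrow N \leftrightarrow M$ of input/output-preserving universal substitutions established in Theorem 4.6 is exactly what collapses the admissible grounding set down to $d_S(M,L)$, and the hypothesis empties that set of usable content, leaving only strict dependency — that is, triviality — available to any theory that judges an LLM conscious.
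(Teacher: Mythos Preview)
Your proposal is correct and matches the paper's intended route: the paper states this corollary without a separate proof, treating it as the immediate consequence of the Constraint Theorem (Theorem~4.6) together with Definition~4.2 (and, implicitly, the last clause of the Proximity Principle, Proposition~4.5). Your contrapositive unpacking --- arbitrary $T$, assume non-trivial, invoke Theorem~4.6 to force grounding in $d_S(M,L)$, then apply the hypothesis to reach a contradiction --- is exactly that implicit argument made explicit, and your observation about reading the hypothesis at the level of \emph{subsets} of $d_S(M,L)$ (to block combinations of individually-insufficient properties) is a genuine clarification that the paper itself leaves tacit.
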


At this point in our reasoning, we should already have strong reasons to question LLM consciousness, simply due to how significant this constraint is. Certainly, there are some details like scaffolding, interfaces, if it is deployed, and what type of contemporary LLM, etc., that I am leaving aside. However, at least when it comes to the core ``bare bones'' of a baseline contemporary LLM (like, say, the original GPT-4, with no router or other complications), any claim to consciousness is confined to its distance to a static single-hidden-layer approximation of its I/O (which is guaranteed by the universal approximation theorem \cite{hornik1989multilayer}). And note: Here, the substitution of a static single-hidden-layer FNN is another kind of artificial neural network that computes I/O in much the same way, in that it has similar activation functions, is also instantiated on a computer, has identical I/O, and so on. So if we just stopped here at Corollary 4.7, LLM consciousness would, at minimum, be highly constrained. It must require something lost in that chain of universal substitutions (like, say, extra layers). 

If one still holds that a single-hidden-layer static FNN is non-trivially conscious, then there is also the distance to the provably-trivial lookup table to consider (there, the only relevant property appears to be compression, which is also not a good target for theories of consciousness---see Corollary 4.10 for a proof that static single-hidden-layer static FNNs are non-conscious). If a theory of consciousness cannot specify what is important and lost along this substitution chain, then it cannot claim that LLMs are conscious.

However, one can go further, because this so far has only discussed the second horn of the Kleiner-Hoel dilemma: the relationship between triviality and strict dependency. If we add back in the first horn of mismatches following substitutions to this transitive chain, the proof rules out baseline LLM consciousness entirely. 

\begin{proposition}
Any theory of consciousness grounded in properties within $d_S(N, L)$ or $d_S(M, N)$ is either trivial or is a priori falsified by the substitution chain $L \leftrightarrow N \leftrightarrow M$.
\end{proposition}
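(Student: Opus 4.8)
The plan is to establish the dichotomy by tracking what $T$'s grounding properties do along the chain $L \leftrightarrow N \leftrightarrow M$ and then reading off which horn of the Kleiner-Hoel dilemma the framework forces. First I would fix notation: let $T$ be a theory whose predictions are grounded in (sensitive to) a set of properties $\mathcal{Q}$, with either $\mathcal{Q} \subseteq d_S(N,L)$ or $\mathcal{Q} \subseteq d_S(M,N)$. By Steps 2 and 3 of the Constraint Theorem (Theorem 4.6), every link of $L \leftrightarrow N \leftrightarrow M$ preserves the input/output function, so the inference function $inf$ takes the same value on $L$, $N$, and $M$; only $pred$ can possibly differ across the chain, and the whole argument turns on whether it does.

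The first case to handle is where $pred$ is \emph{not} constant along the chain. Here I would use that, by definition of substitution distance, every property in $d_S(M,N)$ differs between $M$ and $N$ (and likewise every property in $d_S(N,L)$ between $N$ and $L$). So if $\mathcal{Q} \subseteq d_S(M,N)$, the single substitution $M \leftrightarrow N$ flips every property $T$ relies on; since $T$'s predictive content is grounded \emph{wholly} in $\mathcal{Q}$ --- features on which $L$, $N$, $M$ differ while sharing I/O, hence features completely screened off from the output --- there is no I/O-based residue left to anchor any prediction once those properties are swapped, so the change in $pred$ is pervasive rather than local: the overwhelming majority of $T$'s predictions change while every inference stays fixed. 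That is a universal substitution in the sense of the framework, so by the first horn $T$ is a priori falsified. The case $\mathcal{Q} \subseteq d_S(N,L)$ is identical using the link $N \leftrightarrow L$, with Step 2 of Theorem 4.6 licensing the lookup-table representation (legitimate since LLM inputs are finite and discretized).

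The complementary case is where $pred$ \emph{is} constant along the chain. Then $T$'s verdicts do not actually move when the $\mathcal{Q}$-properties are varied, so the nominal dependence on $\mathcal{Q}$ is vacuous --- either the theory never really used those properties, or the dependence has been neutralized by construction, as in the patched-theory move discussed earlier in which a functionalist stipulates that predictions be updated to track inferences after any substitution. Either way $pred$ agrees with $inf$ across the whole chain, and, since the features $T$'s predictions actually track are common to systems with identical I/O, they reduce to the I/O-determined features; hence $pred$ co-varies with $inf$ and cannot be varied independently of it. By Definition 4.1 that is strict dependency --- the second horn --- so $T$ is trivial. Combining the two cases gives the proposition; the mixed situation in which $\mathcal{Q}$ straddles both $d_S(N,L)$ and $d_S(M,N)$ reduces to applying the two links in succession, a pervasive change in $pred$ at either one already delivering the first case.

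I expect the main obstacle to be the ``pervasive rather than local'' step: making rigorous that a theory grounded \emph{entirely} in I/O-screened-off properties loses essentially all predictive anchoring --- not merely some of it --- when those properties change, so that the chain genuinely counts as a \emph{universal} substitution and not a partial one. I would support this with Definition 4.4 (non-triviality already requires the predictions to be grounded in the lost properties $\mathcal{P}$, so there is nothing else for them to be grounded in) together with the fact that $d_S(N,L) \cup d_S(M,N)$ is by construction the whole stock of such properties along the chain, and I would invoke the framework's explicitly permissive reading of ``universal substitution'' (overwhelming majority of predictions change, overwhelming majority of inferences unchanged) so the argument need not hinge on a literal ``all.'' A secondary subtlety is whether ``grounded in'' is read strictly (a function of exactly those properties) or loosely (sensitive to them among other things); under the loose reading the residual dependence is precisely the I/O-dependence that the second case converts into triviality, so the dichotomy is robust either way.
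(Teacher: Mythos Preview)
Your proposal is correct and follows essentially the same route as the paper: fix inferences across the I/O-preserving chain, observe that any property in $d_S(M,N)\cup d_S(N,L)$ varies across it, so grounding predictions there yields independence and hence a priori falsification, while the residual alternative---effective dependence only on I/O---collapses to strict dependency and triviality. Your explicit dichotomy on whether $pred$ is constant along the chain, and your discussion of the ``pervasive rather than local'' worry and the strict/loose reading of ``grounded in,'' are more careful than the paper's own Step~3--Step~5 transition, but they do not change the structure of the argument.
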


\begin{proof}

\textbf{Step 1.} The substitution chain $L \leftrightarrow N \leftrightarrow M$ is constructed to preserve I/O. Since empirical inferences about consciousness for entities like LLMs are derived from I/O functions (not from, e.g., how many layers they possess), all inference data is fixed across the chain.

\textbf{Step 2.} Let $\rho$ be some specific property in $d_S(M, L) \subseteq d_S(M, N) \cup d_S(N, L)$. By Definition 4.3 of Substitution Distance, assume that $\rho$  varies across the chain.

\textbf{Step 3.} For any theory $T$ that bases its predictions on $\rho$: since $\rho$ varies while inference data is by definition fixed, predictions are necessarily independent from inferences with respect to $\rho$. This necessarily leads to mismatches, meaning that $T$ is a priori falsified.

\textbf{Step 4.} The only alternative is if $\rho$ does not vary, in which case, $T$ must necessarily ground predictions solely in the I/O structure itself. This creates a situation of strict dependency, rendering $T$ trivial.

\textbf{Step 5.} Therefore, for any given theory of consciousness $T$ it either: (a) grants consciousness to contemporary baseline LLMs, but does not for trivially non-conscious systems, yet then $T$ is a priori falsified by mismatches in prediction when non-conscious systems are substituted in for LLMs; or (b) grants consciousness to LLMs and non-conscious systems, but then the theory must apply in the latter case, and so must be trivial.
\end{proof}

An alternative summary of Proposition 4.8 is to say that the substitution chain $L \leftrightarrow N \leftrightarrow M$ is constructed to fix I/O, which essentially exhausts all relevant inference data (especially in light of Corollary 4.10, which proves static single-hidden-layer FNNs are non-conscious), since there would have to be a priori neutral and non-empirical and non-theory-based reasons to assume LLM consciousness, which somehow do not apply to single-hidden-layer FNN consciousness, and to such a degree that it (somehow) overrides the pathological mismatches with a non-conscious system. Therefore, any property that varies across the chain should be independent of inferences. This eliminates the possibility of overcoming the Kleiner-Hoel dilemma within this substitution chain.

\begin{theorem}[Disproof of LLM Consciousness]
If Proposition 4.8 holds, then deployed baseline LLMs are not conscious.
\end{theorem}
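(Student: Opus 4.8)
The plan is to show that Theorem 4.9 follows almost immediately from Proposition 4.8 together with the earlier definitional machinery, since most of the real work has already been done. First I would restate what is at stake: by Definition 4.2, a system is non-conscious precisely when any theory calling it conscious is forced to be trivial; so the target reduces to showing that for every theory $T$ of consciousness, if $T$ predicts that a deployed baseline LLM $M$ is conscious, then $T$ is trivial (or, equivalently in the paper's framing, $T$ is not a viable scientific theory because it is either trivial or a priori falsified).

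The core argument I would give proceeds by dichotomy on where $T$ grounds its prediction for $M$, using the substitution chain $L \leftrightarrow N \leftrightarrow M$ established in Theorem 4.6. By the Constraint Theorem (Theorem 4.6), any non-trivial $T$ predicting $M$ conscious must ground its predictions entirely in properties lying in $d_S(M,L) \subseteq d_S(M,N) \cup d_S(N,L)$. So either (i) $T$ grounds its prediction in some property $\rho \in d_S(M,N) \cup d_S(N,L)$, or (ii) $T$ grounds its prediction in the input/output structure shared across the chain. In case (i), Proposition 4.8 applies directly: $T$ is either trivial or a priori falsified by the substitution chain, because $\rho$ varies along the chain while all inference data (being fixed by the input/output function) stays constant, producing the universal mismatch pattern that the Kleiner--Hoel framework treats as falsification. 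In case (ii), grounding predictions in input/output alone is exactly strict dependency in the sense of Definition 4.1, so $T$ is trivial. Either way $T$ fails to be a non-trivial, falsifiable theory, and hence by Definition 4.2 (with the paper's stated identification of ``trivial'' with ``non-conscious-licensing'') $M$ is a non-conscious system, i.e., deployed baseline LLMs are not conscious.

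I would then close the loop on the one gap that case (i) leaves open: Proposition 4.8 disjoins ``trivial'' with ``a priori falsified,'' and for the disproof we want to conclude simply ``not conscious,'' so I must note that an a priori falsified theory cannot be the true scientific theory of consciousness either, and therefore cannot be the theory in whose light we would judge $M$ conscious. This is where I would invoke the paper's standing assumption (stated just after Definition 4.1) that a scientifically informative theory of consciousness exists; the true theory is by hypothesis non-trivial and not a priori falsified, so by the dichotomy above it does not predict $M$ conscious. I would also explicitly flag the mild slack the paper permits (``overwhelming majority'' of predictions/inferences rather than literally all) and remark that the argument is robust to it, since the substitution chain moves essentially every consciousness-relevant internal property while holding essentially every inference fixed.

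The main obstacle, and the place I would spend the most care, is Step 2 of the ``either trivial or a priori falsified'' dichotomy: the claim that there is no room for \emph{lenient} dependency anywhere inside $d_S(M,N) \cup d_S(N,L)$. A determined defender of LLM consciousness could try to argue that some property in this space (say, depth, or the attention mechanism, or compression relative to the lookup table) supports a theory that is neither strictly dependent on input/output nor pathologically falsified by the chain --- i.e., a genuine third option. Proposition 4.8's Step 5 asserts this is impossible, but the force of that step rests on the claim that \emph{any} property varying across a chain that fixes all inference data must be independent of inferences, leaving no middle ground within this particular substitution space. I would make the argument airtight by emphasizing that lenient dependency requires inferences and predictions to be \emph{correlated but not identical}, which in turn requires some inference-bearing feature that the property tracks imperfectly; but the chain $L \leftrightarrow N \leftrightarrow M$ is engineered so that the only inference-bearing feature (input/output) is held exactly fixed, so nothing internal can track it even imperfectly. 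Hence within this space the dependency is necessarily strict (case ii) or null (case i), with no lenient intermediate --- which is exactly why the positive result later in the paper must reach \emph{outside} this space, to continual learning, to find lenient dependency at all.
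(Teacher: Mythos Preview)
Your proposal is correct and follows essentially the same route as the paper's own proof: invoke the Constraint Theorem to confine any non-trivial theory's predictions to properties in $d_S(M,N)\cup d_S(N,L)$, apply Proposition~4.8 to show every such theory is trivial or a priori falsified, and conclude via Definition~4.2 that $M$ is non-conscious. The paper's proof is a terse three-line version of exactly this chain, whereas you unpack the dichotomy explicitly, handle the ``falsified vs.\ trivial'' disjunction, and pre-empt the lenient-dependency objection---all of which is sound elaboration rather than a different argument.
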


\begin{proof}
By the Constraint Theorem, any non-trivial theory predicting LLMs are conscious must base its predictions on properties in $d_S(M, N) \cup d_S(N, L)$. By Proposition 4.8, every property in this space is falsified by the universal substitution chain itself. Therefore, under Definition 4.2, LLMs are non-conscious systems.
\end{proof}

One thing that's notable here is that Theorem 4.6 alone, just by strict dependency requirements, significantly constrains what kinds of theories of consciousness can apply to LLMs to some subset of their difference from non-conscious systems. It says that a theory of consciousness must be grounded in some property (or properties) lost between the substitution of an LLM to a (functionally equivalent) single-layer feedforward neural network to a (functionally equivalent) lookup table. Based on this restricted space of possibilities, Theorem 4.9 disproves their consciousness entirely by noticing that the fact that a universal substitution chain exists means that, even if there were some property lost (that a hypothetical theory of consciousness picked out) that only means mismatches would occur and so the theory would be a priori falsified. 

An example of how the Proximity Argument works can be seen by zooming into one link in the chain from lookup tables to single-hidden-layer FNNs. Specifically, we can make the same form of argument there, using a specific property, compression, that is within the substitution distance (and the only one available substantial enough to ground a theory of consciousness).

\begin{corollary}[\textbf{Static Single-hidden-layer FNNs are non-conscious systems}]
Examining a single step in the Proximity Principle: consider the substitution distance $d_S(N, L)$ between a lookup table, $L$, and a static single-hidden-layer FNN, $N$, implementing the same I/O function. This distance can be entirely summarized as the degree of compression of the I/O function, and since compression can be varied via universal substitutions, therefore single-hidden-layer FNNs are non-conscious systems.

\begin{proof}
\textbf{Step 1.} A lookup table with $n$ discrete inputs can be implemented as a maximally wide static single-hidden-layer FNN (with one hidden unit per possible I/O).

\textbf{Step 2.} Substitute in a narrower network with identical I/O. Since that I/O is fixed, the only difference is in the degree of compression in the implementation.

\textbf{Step 3.} Any theory $T$ that grounds its predictions of consciousness in compression can therefore yield independent predictions across networks, all while inferences (necessarily derived from I/O) remain fixed---thus creating the pathological mismatches that falsify $T$.

\textbf{Step 4.} Therefore, only the I/O function itself remains to actually ground $T$, and so by Definition 4.2 of non-conscious systems, static single-hidden-layer FNNs are non-conscious systems.
\end{proof}

\end{corollary}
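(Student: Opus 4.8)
The plan is to apply the Proximity Principle (Proposition 4.5) to the single link $s = N$, $s' = L$, where $L$ is a lookup table realizing the same finite input/output map as the single-hidden-layer FNN $N$. Since $L$ has already been established as a non-conscious system (via Definition 4.2 and the lookup-table discussion following it), the corollary follows once I show that \emph{no} property in $d_S(N, L)$ can ground a non-trivial prediction. So the work splits in two: first, pin down what actually differs between $N$ and $L$; second, show that whatever differs can be turned into a universal-substitution dial while input/output is held fixed, hence cannot support a non-trivial theory.

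For the first part I would argue that, once the input/output map is fixed, the only remaining structural freedom in a single-hidden-layer FNN is how compactly that map is stored: the lookup table is the maximally ``unpacked'' realization (effectively one hidden unit per input), and any narrower network computing the same function is just a compressed re-encoding of the same table. Properties one might be tempted to cite---that $N$ has artificial neurons, weighted edges, a nonlinear activation, a particular silicon substrate---are either shared with a lookup-table implementation (both are finite IF--THEN machines running on a computer) or are themselves descriptions of one particular compressed encoding, and so fold into ``compression.'' This reductive step is the main obstacle: asserting that \emph{nothing else} about a single-hidden-layer FNN is consciousness-relevant is a substantive claim, and an opponent could try to name some feature (a weight-sharing pattern, the geometry of the hidden layer, and so on) that they deny is ``just compression.'' My reply is structural: any such feature is a property of one particular realization of a fixed function, hence by construction it lies in $d_S(N, L)$, and is therefore subject to exactly the substitution argument of the next step.

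For the second part---the routine one---I would exhibit the family of networks obtained by starting from the maximally wide network (equivalently, $L$ itself) and substituting progressively narrower networks with identical input/output, or conversely padding a compact network back out. Across this family all inference data, being derived solely from input/output, is constant, while any theory $T$ whose predictions respond to degree of compression produces predictions that vary---precisely the pathological, across-the-board mismatch that the Substitution Argument (Figure~\ref{fig:substitution}) converts into \emph{a priori} falsification. Hence a non-trivial $T$ cannot be grounded in compression; but by the first part compression exhausts $d_S(N, L)$, so by Definition 4.4 there is nothing left for a non-trivial $T$ to be sensitive to, and the only remaining basis---the input/output map itself---coincides with the basis of inference, forcing strict dependency and triviality. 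By Definition 4.2, then, single-hidden-layer FNNs are non-conscious systems. I would close by flagging ``compression exhausts the difference'' as the single load-bearing premise, noting that it inherits whatever force it has from the same reasoning already applied to lookup tables.
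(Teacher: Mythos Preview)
Your proposal is correct and follows essentially the same route as the paper: identify the lookup table with the maximally wide single-hidden-layer network, sweep across narrower input/output-equivalent realizations to show that compression is a dial the Substitution Argument can turn, and conclude that only input/output remains as a prediction basis, forcing triviality via Definition~4.2. Your explicit defense of the ``compression exhausts $d_S(N,L)$'' premise and your flagging of it as load-bearing go beyond what the paper spells out, but the underlying argument is the same.
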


Regarding Corollary 4.10, it is notable that $K(f)$ could be used instead, representing the point of maximal compression (while lookup tables represent the point of minimal compression).

Zooming out: there exist in the literature previous ``no-go'' results on AI consciousness, such as arguments against AI consciousness based on causal exclusion \cite{kleiner2024dynamically} (though similar arguments have been given against the efficacy of human consciousness as well \cite{kim2000mind}), as well as other arguments for biological naturalism \cite{seth2024conscious, campero2025consciousness}. However, one unique aspect of the Proximity Argument is that it makes minimal assumptions other than that: (a) a theory of consciousness cannot be trivial or a priori falsified, (b) that certain universal substitutions are possible (which we know from existent mathematical theorems to be true), and (c) that of widespread or universal mismatches from a theory in the comparative testing setup described would falsify that theory. Nor does it rely on a particular theory; e.g., while transformers provably are feedforward and have zero integrated information (thus, IIT would say that LLMs are not conscious) \cite{ali2025intelligence}, the Proximity Argument goes beyond relying on a particular theory. It requires no opinion on whether qualia is causally relevant or epiphenomenal, nor whether functionalism is true, nor any other philosophical position about consciousness, nor does it rely on any particular theory of consciousness or even class of theories to be true (it does not, e.g., require solving the Hard Problem \cite{chalmers1995facing}). It is a philosophically neutral proof that both constrains and disproves LLM consciousness based on falsification, linking up to historical critiques about functionalism and intelligence (e.g., the Blockhead thought experiment \cite{block1981psychologism} and the Chinese Room thought experiment \cite{searle1980minds}) although it is not framed as a \textit{reductio ad absurdum} but based on concerns about falsifiability.

It should also be noted that, following Theorem 3.1, aspects of the Proximity Argument could be restated in terms of the shortest program $K(f)$ acting as the substitute for the LLM. Indeed, the shortest program is much like a mirror of a lookup table: instead of being the maximally-complex implementation (wherein every part relates to the function, at least), the shortest program is the minimally-complex implementation, and it too is (a) different from whatever original program or system was specified, and (b) just as much a direct function of I/O.

\subsection{The Kleiner-Hoel Dilemma is Particularly Forceful for LLMs}

Why does the Kleiner-Hoel dilemma not apply just as much to humans? Here, I give multiple reasons why the arguments are particularly forceful in LLMs while considering possible interpretations of these proofs. I also consider cases where the Kleiner-Hoel dilemma \textit{does} apply in humans, how theories might escape the dilemma, and how rescuing LLM consciousness requires extreme and unscientific beliefs about consciousness.

\begin{enumerate}
    
    \item \textbf{Non-conscious universal substitutions for LLMs are well-defined and constructible.} In some cases, universal substitutions may be merely conceivable (e.g., consider an infinite lookup table implementing $f$, wherein $f$ is somehow given). However, in practice, this apparent conceivability may turn out to be ill-defined. E.g., can a human brain be substituted with a functionally-equivalent single-hidden-layer FNN, even one impractically astronomically wide? Is a human's \textit{f} clearly definable? Asking these questions explicitly is \textit{not} necessarily agreeing that universal substitutions are ill-defined in humans---indeed, the entire rest of the paper will assume these questions are answerable. However, it is undeniable there is significant subtlety and complexity to them. It may be that actual definable and constructible universal substitutions for humans (especially those that count as non-conscious according to Definition 4.2) could be somehow ruled out by details we don't know or are hard to foresee (e.g., quantum processes in the brain \cite{hameroff2014consciousness}, or some computational irreducibility in the process \cite{wolfram2002new}, or the no-cloning theorem \cite{Wootters1982} or some other factor that crops up in an unexpected way).
    
However, for baseline LLMs, it appears from the literature that universal substitutions are \textit{actually} constructible via chaining together known methods. Specifically, it is known that transformers can be arbitrarily approximated as RNNs \cite{katharopoulos2020transformers, choromanski2021rethinking}. Then, it is known that (static) RNNs can be ``unfolded'' into a single-hidden-layer FNN \cite{doerig2019unfolding, o2025caveat}, which can in turn be replaced via enumeration via a lookup table. So the chain discussed in the Proximity Argument (e.g., Theorem 4.6) is literally constructible via known methods, and could potentially even be given as defined transformations. Perhaps for a small-enough language model, some version of it could be transformed this way with current technology. That is, we seem to have a general understanding of how to implement non-conscious (Definition 4.2) universal substitutions for LLMs. This should call into question their consciousness significantly. 
    
    \item \textbf{The potential for theories that evade the Kleiner-Hoel dilemma in humans.} Due to the Proximity Principle (Proposition 4.5) there is a far greater Substitution Distance (Definition 4.3) between humans and available non-conscious substitutions (which, from now on, I assume are well-defined and possible). This means that humans have more Consciousness-Relevant Properties (Definition 4.4). Meanwhile, LLMs are much more constrained via the Constraint Theorem (Theorem 4.6). This means there are far more options for theories of consciousness in humans compared to LLMs. Importantly, this means there are far more potential theories that demonstrate ``lenient dependency'' in that they avoid the Kleiner-Hoel dilemma but still are falsifiable and contain meaningful scientific information.
    \begin{definition}[Lenient Dependency]
A theory of consciousness $T$ satisfies lenient dependency if there is not strict dependency between its predictions and inferences and it has no pathological mismatches in predictions due to definable possible universal substitutions.
    \end{definition}

    The remaining sections of this paper will deal with exploring the nature of this lenient dependency via a class of theories of consciousness that are sensitive to or grounded in properties of continual learning.

    \item \textbf{The greater potential for theories that are unscientific or unfalsifiable to assign consciousness to humans.} The alternative to rejecting trivial theories and thus denying conclusions like Definition 4.2 of Non-conscious Systems, is to accept trivial theories (and only trivial theories; otherwise, they would be falsified by the first horn of the Kleiner-Hoel dilemma). And there are indeed various theories of consciousness that may contain minimal scientific information and so avoid the Kleiner-Hoel dilemma by being trivial. These might be categorized as metaphysical theories that are also scientifically trivial, as per Definition 4.1 of trivial theories. However, instead of rejecting such theories in favor of a scientifically testable one, as here, one might imagine accepting them. Even so, all such theories have far more ``room'' in humans, due to the many more properties of humans that can serve as Consciousness-Relevant Properties (Definition 4.4). It is beyond the scope of this paper to delineate metaphysical theories that are still scientifically trivial per Definition 4.1; however, potential candidates include theories like substance dualism \cite{descartes1984meditations} or analytic idealism \cite{kastrup2019idea}. Even panpsychist theories like Russellian monism \cite{chalmers2015panpsychism} face a combination problem \cite{morch2019integrated} that seems more solvable in an integrated and plastic human brain than a feedforward and static LLM. So even when advocating for a trivial theory of consciousness, belief in LLM consciousness is constrained to a subset of theories which seem the least metaphysically ``worthwhile'' (e.g., behaviorism).

    \item \textbf{The scope of inferences could be restricted.} One potential strategy for dealing with the Kleiner-Hoel dilemma is to severely restrict the scope of consciousness science. It has been advocated to simply test theories in humans and ignore contradictory reports from substitutions \cite{albantakis2020unfolding, usher2023philosophical}. It is imagineable that this could be justifiable if consciousness science is unlike other sciences and relies on first-person evidence  \cite{tononi2025consciousness}. In particular, this fits with theories based in phenomenology, like IIT \cite{oizumi2014phenomenology}. However, it is already known that such theories show that LLMs would possess zero integrated information \cite{ali2025intelligence}, and IIT would rule out even a simulated brain on a computer being conscious \cite{findlay2024dissociating}. Overall, while this seems conceptually possible, this move is restrictive. A theory of consciousness would be not universal or robust, but highly fragile to falsification, and kept from falsification via artificial restrictions. Currently, it also remains undefined. What, ultimately, is the difference between an experiment and a substitution? Already there have been proposed experiments in humans that complexify any attempted demarcation \cite{dennett2001we}. And most relevantly to the thesis here, restricting inferences is particularly problematic if someone were to advocate for it and also LLM consciousness at the same time, as it means accepting that non-human behavior (even highly complex) contains zero actionable scientific information about consciousness. Therefore, there would be no particular reason to think LLMs would possess consciousness to begin with, and no behavior from an LLM would count as evidence for its consciousness.
    
    \item \textbf{The Kleiner-Hoel dilemma could indicate the undecidability between theories of consciousness.} I have previously argued that consciousness could act much like a Gödel sentence in science \cite{hoel2024world}---there are alternative proposals for various forms of related mysterianism as well \cite{mcginn1999mysterious}. This would likely mean there are no falsifiable theories of consciousness, and, if the Kleiner-Hoel dilemma was truly universal across systems (including humans) and theories of consciousness (and \textit{only} if this universality held) it would support that consciousness is beyond empirical science's ken; however, the reasoning for metaphysical but unscientific trivial theories having more ``room'' for humans in them would still apply (as in \#2).
    
\end{enumerate}

For the rest of the paper, I explore the scientifically-promising initial given reason for why the Kleiner-Hoel dilemma may not apply in humans: theories of consciousness may possess lenient dependency.  In the following section, I argue that lenient dependency must be a continually-existing criterion for theories of consciousness (a high bar). Then, I propose that learning is a form of lenient dependency in humans, and that this is necessarily restricted to truly \textit{continual} learning (the learning occurring with, or potentially even constituting, every single conscious experience). If this further argument is correct, it would significantly constrain consciousness in humans to involve the process of continual learning. I argue that this restriction is scientifically positive: while it rules out most existing theories in the field, it still points us toward evocative and rich theories of consciousness and reveals what a minimal falsifiable and non-trivial theory of consciousness would look like. Additionally, it indicates that a fundamental difference between LLMs and humans is the ability to continually learn and that LLM's lack of consciousness may be tied to their static nature.

\section{Continual Lenient Dependency}

Here, I argue that avoiding the Kleiner-Hoel dilemma forms a strong constraint on theories, in that theories must \textit{continually} be based on properties that satisfy lenient dependency at \textit{every prediction} by the theory with regard to the contents or presence of consciousness. In other words, lenient dependency is a continual and ongoing requirement for theories of consciousness.

\begin{proposition}[Continual Requirement for Lenient Dependency]
Let $T$ be a theory of consciousness with lenient dependency grounded in some property $p$. Then $p$ must be continually present during the detection of consciousness: that is, there can be no temporal gaps during which a system $s$ lacks $p$ while $T$ continues to make non-trivial predictions that $s$ is conscious.
\end{proposition}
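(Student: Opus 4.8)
The plan is to argue by contradiction: suppose $T$ has lenient dependency grounded in property $p$, yet there is a temporal window $W$ during which the system $s$ lacks $p$ while $T$ still predicts (non-trivially) that $s$ is conscious during $W$. The strategy is to show that this forces $T$ onto one of the two horns of the Kleiner-Hoel dilemma \emph{during} $W$, contradicting the assumption that $T$ was genuinely navigating the dilemma via $p$. First I would fix the experimental setup restricted to the window $W$: inferences $inf$ are still being made (behavior, report, or output during $W$), and predictions $pred$ about the presence or content of consciousness are still being issued during $W$, since $T$ by hypothesis judges $s$ conscious throughout $W$.

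The core step is then a case split on what $pred$ is grounded in during $W$. Since $p$ is absent during $W$, $T$'s non-trivial prediction that $s$ is conscious during $W$ must be grounded in some other set of properties $\mathcal{Q}$ (possibly including structural or input/output features). If $\mathcal{Q}$ is varied freely relative to the fixed inferences during $W$ --- which it can be, precisely because $p$ (the property that was supposed to supply lenient dependency) is not present to constrain it --- then we can construct a universal substitution on the window $W$ that changes all of $T$'s predictions while leaving the inferences during $W$ unchanged, so by the Substitution Argument $T$ is a priori falsified. The only escape is for $\mathcal{Q}$ to be strictly dependent on the inferences during $W$ (e.g., grounded solely in input/output over that window), but then by Definition 4.1 $T$ is trivial during $W$. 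Either way, $T$ fails to be a non-trivial, non-falsified theory with respect to its judgments about $s$ during $W$, contradicting the standing assumption. Hence no such gap $W$ can exist, and $p$ must be continually present whenever $T$ issues non-trivial predictions of consciousness.

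The main obstacle I anticipate is making precise the claim that ``lenient dependency is not transferable across the gap'' --- that is, ruling out the possibility that $p$'s presence \emph{before} or \emph{after} $W$ could somehow license the non-trivial prediction during $W$ itself. The worry is a theory that says: ``$s$ is conscious during $W$ because $s$ had $p$ a moment ago (or will have it a moment later).'' I would handle this by noting that the falsification framework evaluates the match between $pred$ and $inf$ \emph{at the time the prediction is about}: a prediction about consciousness during $W$ must be checkable against inferences drawn during $W$, and if the only thing anchoring that prediction to inferences is a property located outside $W$, then within $W$ the prediction-inference link is again either free-floating (hence substitutable, hence falsifiable) or a disguised restatement of the out-of-window inferences (hence trivial). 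A secondary, more technical obstacle is stating the ``no temporal gaps'' conclusion at the right granularity --- I would phrase it as: for every instant (or experimental trial) at which $T$ makes a non-trivial prediction of consciousness for $s$, $p$ must obtain at that instant (or over the relevant trial), which is exactly what is needed downstream to force continual --- rather than merely occasional --- learning in the sequel.
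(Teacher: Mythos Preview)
Your proposal is correct and follows essentially the same route as the paper: assume a gap (the paper uses a single time $t^*$, you use a window $W$), observe that without $p$ there is no lenient dependency at that moment, and then apply the two horns of the Kleiner-Hoel dilemma to conclude that $T$ is either a priori falsified by a universal substitution or trivial by strict dependency. Your treatment is in fact more careful than the paper's---your explicit case split on the residual grounding set $\mathcal{Q}$ and your discussion of the ``transferability across the gap'' worry go beyond what the paper's five-step proof spells out, but the underlying argument is the same.
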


\begin{proof}
\textbf{Step 1.} Assume lenient dependency for $T$ is grounded in property $p$ of system $s$.

\textbf{Step 2.} Suppose there exists a gap: a time $t^*$ at which $s$ lacks $p$, yet $T$ still predicts that $s$ is conscious.

\textbf{Step 3.} At $t^*$, without $p$, system $s$ can be substituted with a non-conscious system $s'$ (Definition 4.2), such as a static single-hidden-layer FNN (Corollary 4.10).

\textbf{Step 4.} Therefore, $T$ must either (a) make different predictions for $s'$, creating pathological mismatches, or (b) predict consciousness in $s'$, indicating that $T$ itself is trivial.

\end{proof}

The only exception to this would be if \textit{both} (a) a theory could be tested only under conditions of lenient dependency and those instances are enough scientific weight to carry a theory, and \textit{also} (b) the theory were grounded in different properties at different times. However, this seems to indicate two separable theories of consciousness, and the theory that was not grounded in continual lenient dependency would fail for the same reasons as in Proposition 5.1.

\begin{corollary}
For a theory of consciousness, the property $p$ grounding lenient dependency must be present whenever the theory predicts consciousness; otherwise, the theory is necessarily trivial (strict dependency) or a priori falsified by universal substitutions.
\end{corollary}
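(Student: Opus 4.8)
The plan is to obtain this as a short contrapositive repackaging of Proposition 5.1, since the Corollary simply restates that Proposition's conclusion as a standing constraint on admissible theories. First I would fix a theory $T$ whose lenient dependency is grounded in a property $p$, and suppose for contradiction that there is a time $t^*$ at which $T$ issues a non-trivial prediction that a system $s$ is conscious while $s$ lacks $p$ at $t^*$. The goal is then to show that $T$ must be trivial or a priori falsified, which is exactly the dichotomy the Corollary asserts.

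Second, I would run the case analysis already used in the proof of Proposition 5.1, but phrased as a statement about \emph{what else} $T$ could be grounding the $t^*$ prediction in. With $p$ absent, the basis of the prediction is something other than $p$, and there are only two kinds of thing it can be. Either (a) it is an internal or structural feature screened off from the system's input/output, in which case the Substitution Argument (the universal substitution $S$ of Figure~\ref{fig:substitution}) applies at $t^*$: one swaps in a functionally equivalent system that leaves all inference data fixed but changes the prediction, producing the pathological, across-the-board mismatches that the framework stipulates as genuine falsification; or (b) it is the input/output structure itself, in which case predictions and inferences draw on the same source and cannot be varied independently, so by Definition~\ref{...}\,(Trivial Theories) $T$ is trivial at $t^*$ --- and I would note that ``trivial for some predictions'' suffices to defeat $T$ as a scientifically informative theory. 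Either branch closes the argument, so the Corollary follows by contraposition.

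Third, I would close the ``separable theories'' loophole flagged in the remark following Proposition 5.1 --- the scenario in which $T$ is grounded in different properties at different times and is only ever probed under conditions of lenient dependency. I would argue this does not genuinely evade the Corollary: a construct that switches its grounding property between $t$ and $t^*$ is really a disjoint family of theories, and the Corollary is a claim about each coherent, single theory of consciousness; furthermore, whichever component is responsible for the $t^*$ verdict is itself, at $t^*$, either strictly dependent or exposed to a universal substitution, so the glued object inherits the defect precisely where $p$ is missing. Hence any object we would recognize as \emph{one} theory of consciousness satisfies the Corollary.

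The main obstacle I anticipate is not the logical core, which is a one-line contraposition of Proposition~5.1, but pinning down the quantifier ``whenever the theory predicts consciousness'' so that it coexists with the permissiveness the paper has already granted (the earlier concession that a well-confirmed theory may tolerate ``some occasional mismatches,'' e.g.\ for plants). I would resolve this by stating the requirement at the level of \emph{universal} substitutions: $p$ must be present at every occasion on which $T$ issues a non-trivial consciousness verdict that a universal (not merely partial) substitution could target. Under that reading the permissive clause does no harm, because a universal substitution acting at a $p$-free time $t^*$ changes all --- or the overwhelming majority of --- predictions while holding all inferences fixed, which the framework already classifies as outright falsification rather than a tolerable edge case, and the Corollary stands as the intended continual-grounding constraint.
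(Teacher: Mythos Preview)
Your proposal is correct and takes essentially the same approach as the paper: the paper offers no separate proof of this Corollary at all, stating it immediately after Proposition~5.1 and the ``separable theories'' remark as a direct restatement of that Proposition's conclusion. Your plan to contrapose Proposition~5.1, rerun its (a)/(b) case analysis, and then close the separable-theories loophole is precisely how one would make explicit what the paper leaves implicit, and your third paragraph even mirrors the paper's own remark preceding the Corollary.
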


That is, for any case of lenient dependency, it must be constantly ongoing during predictions of consciousness. This significantly constrains theories of consciousness. It also makes it very clear that it is not the kind of system being tested that matters (e.g., if a system is plastic, or reentrant, etc.) but rather the properties a theory of consciousness is itself grounded in \textit{all the time}.

\subsection{Continual Learning as Continual Lenient Dependency}

Lenient dependency requires that a theory's predictions not be a priori falsified by known universal substitutions but is \textit{also} not strictly dependent, and we now know from Proposition 5.1 that this must continually hold or else a theory of consciousness collapses. Here, I show that a theory involving continual learning offers such a hypothesis by significantly complicating the viability of universal substitutions.

Naturally, a theory of consciousness involving learning would involve testing learning rather than merely static I/O (including evidence of learning in inferences). Specifically, we can define for some static function \textit{f}, there is $\Delta f$, which here represents the change from learning.

\begin{definition}[Static System]
For a system $s$ with an I/O function, $f$, then $\Delta f = 0$ if identical-in-content inputs performed at different times cannot possibly yield different outputs, excluding randomness or internal changes not caused by inputs.
\end{definition}

\begin{proposition}
Static Systems (Definition 5.3), such as a lookup table $L$ implementing the function $f$, would be unable to act as a universal substitution for a learning system in a consciousness-testing regime sensitive to $\Delta f$, since for the original system $\Delta f \neq 0$ and by Definition 5.3 $L$'s $\Delta f = 0$.
\end{proposition}

\begin{corollary}
Static Systems which implement a static $f$ cannot be valid universal substitutes for systems that learn and for which $\Delta f \neq 0$. Any universal substitute must either actually itself learn as well (via things like changes to internal memory), or be a non-learning system which either:

\begin{enumerate}[(a)]

    \item  \textbf{Stores external memory.} A system may input $(x, \text{history})$ rather than $x$ alone, like a static lookup table. This would violate Definition 5.3 of being a Static System, since identical inputs cannot yield differentiable outputs. Additionally, I/O will necessarily expand in scope to include another kind of data (history) rather than, e.g., sensory processing, leading to different domains over the input (and possibly output if a loop is necessary), indicating a failure to substitute.
    
    \item  \textbf{Is experimentally derivative. }A series of static substitutions might replace the trajectory of a learning system. However, testing would require ongoing reconstruction, i.e., the series cannot be truly substituted in for the original and experimented on in isolation, as the series depends on continually monitoring changes in the original, upon which the actual experiments that include $\Delta f$ must also be performed. 

\end{enumerate}
\end{corollary}

This is supported by the literature: indeed, a specific case that a plastic recurrent neural network actually cannot be unfolded into an equivalent static FNN has been shown \cite{o2025caveat}. However, by focusing on learning rather than on plasticity, and substitutions in general rather than just unfolded networks, this offers a broader case. E.g., in \cite{o2025caveat} many of the proofs are based on resource constraints, and do not rule out cases like an infinite static system that contain all possible histories acting as a viable substitution (which are physically impossible, but still potentially relevant to reasoning about falsification). However, Corollary 5.5 indicates that even such instances are impossible. Note that this applies to, e.g., LLMs acting as substitutes for humans: for an LLM, the context window is fed back into the input, thus, via Corollary 5.5(b), baseline LLMs mimic learning via external memory. Fed the same $\text{input}(x,\text{history})$ at any time, their output probabilities would be identical. Additionally, their I/O is actually \textit{quite unlike} humans even if they were to act as substitutes in, e.g., a Turing Test \cite{turing2021computing} (each query actually involves the entire previous conversation fed into a static system). This covers the case for what is often referred to as ``in-context learning'' for LLMs.

Given Proposition 5.4's truth, the failure of all static systems to offer viable universal substitutions for learning systems means that the set of universal substitutions is a smaller subset when it comes to theories of consciousness that involve learning. In other words, theories of consciousness that involve learning are harder to a priori falsify by finding universal substitutions. Examples of substitutions in the literature have relied entirely on known mathematical methods like the universal approximation theorem \cite{doerig2019unfolding, hornik1989multilayer} to arbitrarily approximate some continuous function $f$, not some $\Delta f$.  This is not to say that the results herein currently show that \textit{all} universal substitutions for learning systems are impossible. However, they do rule out the class of static substitutions for theories of consciousness within which predictions are grounded in or require continual learning, and point toward testing necessarily including learning. 

Learning also offers clear ways to further complexify the issue that seem to make substitutions harder to define or implement: for instance, learning involves generalization to new never-before-seen inputs, but the status of which under question for I/O substitutes---can you really substitute \textit{every} possible history once learning is involved, without necessarily relying on infinities? Additionally, learning systems may not be enumerable in the same way as static ones, since enumeration of inputs and outputs leads to plastic changes over those very inputs and outputs. Indeed, focusing on learning-based theories of consciousness may connect to the hypothesis that consciousness may involve what Geoffrey Hinton dubbed ``mortal computation'' (for a line of argument that consciousness must be such a mortal computation, see \cite{kleiner2024consciousness})---defined as computation unable to be divorced from its substrate \cite{hinton2022forward}. With regards to continual learning, there could be an analogous version of ``mortal learning,'' wherein changes caused by input depended on no static memory being read/written. If so, this would rule out any universal substitution that could be viewed (under some perspective) as undergoing separable read/write execution as part of the input of the system, thus changing the scope of I/O, for similar reasons as in external memory in Corollary 5.5, and thus failing to be a universal substitution for a mortal learning system.

Finally, we can ask whether (and why) theories of consciousness focused on the process of continual learning would not be strictly dependent. This is much easier to show, since to hold it simply has to be the case that \textit{pred} and \textit{inf} can vary or don't stem from the exact same data. This seems likely the case due to things like the multiple-realizability of learning, or cases of latent learning, and so on.

\begin{proposition}[Learning Escapes Strict Dependence]
A theory of consciousness grounded in continual learning is not strictly dependent.
\end{proposition}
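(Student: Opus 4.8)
The plan is to prove the statement directly against Definition 4.1: a theory $T$ is \emph{not} strictly dependent precisely when $pred$ and $inf$ can be varied independently, so it suffices to exhibit at least one pair of scenarios in which the learning-grounded prediction and the input/output-grounded inference come apart. I would organize the argument around two complementary dissociations, each witnessing that neither function is a restatement of the other, and I would begin by making explicit the (mild) standing assumption that ``grounded in continual learning'' means $pred$ is a genuinely nonconstant function of the system's learning dynamics rather than of its current input/output --- otherwise the grounding would be vacuous, and the standard phenomena the text already cites (multiple realizability of learning, latent learning, the performance/competence distinction) supply the needed non-degeneracy.

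First I would establish the direction ``$pred$ varies while $inf$ is held fixed'' using latent learning together with multiple realizability. The construction: over a bounded observation window one can have two system-histories with identical input/output --- hence identical $inf$ --- while in one of them a learning process is underway whose effects have not yet surfaced in behavior (the classic latent-learning phenomenon), and in the other either no learning is occurring or a different learning trajectory is occurring. Because $T$ reads its predictions off the learning dynamics and not off the momentary I/O, $pred$ differs across the two histories while $inf$ does not. I would be careful to state this over a \emph{finite} observation window, so that there is no conflict with Proposition 5.3, which only forbids a non-learning system from matching a learning system's I/O over \emph{all} time, not over a window.

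Second I would establish the converse, ``$inf$ varies while $pred$ is held fixed,'' which is easier: a system's output can change for reasons that are not learning --- noise, fatigue, decay, an externally imposed perturbation, or being replaced by a derivative static snapshot in the sense of clause (c) of Proposition 5.3 --- so $inf$ shifts while the learning-grounded quantity $T$ depends on does not. The two directions together show $pred$ and $inf$ are not locked, so $T$ satisfies neither clause of Definition 4.1 and is therefore not strictly dependent. I would close by contrasting this with the paradigm strictly-dependent theories (behaviorism, and GNWT on Dennett's reading), where consciousness simply \emph{is} the reportable/accessible information and no such gap can ever be engineered, whereas learning is an internal process distinct from present I/O and so always admits one.

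The hard part will be the first dissociation: making precise that the input/output is genuinely identical across the window while the learning dynamics genuinely differ, without smuggling in the memory-injection problem of clause (b) of Proposition 5.3. The subtlety is that if one over-reaches and asks the two histories to coincide for all time rather than on a window, Proposition 5.3 forces one of them to secretly be a learning system too, collapsing the intended contrast; the argument must therefore stay honest about observation windows and about the fact that latent learning is, by definition, exactly the regime in which internal change outruns behavioral expression. A secondary, more routine obstacle is phrasing ``$T$ is grounded in learning'' so that it entails non-degeneracy of $pred$ in the learning variables --- which I would fold into the definition of grounding and support with the multiple-realizability and latent-learning observations already invoked in the surrounding text.
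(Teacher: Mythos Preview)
Your proposal is correct and shares the paper's central mechanism: the latent-learning dissociation in which two observation episodes have identical behavioral output (hence identical $inf$) but different underlying plasticity/learning states (hence different $pred$). That single counterexample is exactly what the paper uses, and it already suffices, because the paper formalizes strict dependence as the existence of a function $f$ with $pred(o)=f(\mathit{inf}(o))$ for all $o$; exhibiting $o,o'$ with $\mathit{inf}(o)=\mathit{inf}(o')$ but $pred(o)\neq pred(o')$ kills any such $f$.

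Where you diverge is in adding machinery the paper does not need. Your second dissociation (varying $inf$ while holding $pred$ fixed) is superfluous under the paper's one-directional formalization of strict dependence; it would matter only if the definition were symmetric, and the paper does not treat it that way. Your care about restricting to a finite observation window so as not to collide with Proposition~5.3 is thoughtful but also unnecessary here: the paper's two comparison cases are not ``learning system vs.\ non-learning substitute'' but simply ``same (or comparable) learning system in two distinct plasticity states with momentarily identical behavior,'' so the no-go of Proposition~5.3 is never engaged. What your extra scaffolding buys is robustness against a stricter, bidirectional reading of Definition~4.1 and an explicit guard against an objection the paper silently sidesteps; what the paper's version buys is brevity, since one clean latent-learning counterexample already discharges the claim.
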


\begin{proof}
\textbf{Step 1.} Consider a continual-learning-based theory of consciousness, wherein $\mathrm{\textit{pred}}$ is grounded in the ongoing process of plasticity, while $\mathrm{\textit{inf}}$ is derived from behavioral outputs that evince observable evidence of learning.

\textbf{Step 2.} Consider latent learning: a system can undergo plastic changes that do not immediately manifest in behavior. Let $o$ and $o'$ be observations from two systems with identical behavioral outputs (a static substitution) but different plasticity states.

\textbf{Step 3.} Since behavioral outputs are identical, $\mathrm{\textit{inf}}(o) = \mathrm{\textit{inf}}(o')$. However, since plasticity states differ, $\mathrm{\textit{pred}}(o) \neq \mathrm{\textit{pred}}(o')$.

\textbf{Step 4.} If $\mathrm{\textit{pred}}(o) \neq \mathrm{\textit{pred}}(o')$ following a static substitution then strict dependency cannot hold for the theory.
\end{proof}

Finally, it is worth noting that there exists a relationship here to the classic philosophical issues of phenomenological consciousness vs. access consciousness \cite{block2011perceptual}. Learning is about counterfactuals, generalization, and dispositions, and so a theory's predictions grounded in learning (say, focused on physical plasticity) ``overflow'' the access of individual experimental inferences, since the subject of those predictions (like plastic changes) implicitly have effects on a much larger domain (the entire set of all possible experiments, not just the \textit{current} experiment).

\section{Discussion}

\subsection{A Progressive Research Program on Consciousness}

The literature around formal requirements for theories of consciousness, which so far has focused on falsification \cite{doerig2019unfolding} and non-triviality \cite{kleiner2021falsification} and also ``toy'' models \cite{albantakis2025utility}, may to outsiders at first seem a small subfield of the neuroscience of consciousness focused on highly-theoretical ``armchair'' arguments. For this reason, some in the literature have argued that formal or theory-based arguments should take a backseat to empirical research, and we should not put strong constraints on theories of consciousness. \cite{usher2023philosophical}

The results herein show the opposite. Formal requirements for theories of consciousness to be testable offer the glimpse of a rare \textit{progressive} subfield of consciousness research. To put it bluntly: the past two decades have seen a large number of novel theories of consciousness advanced, and empirical research has, so far, not ruled out a single theory of consciousness that I'm aware of. Nor has it answered any urgent questions we desire of consciousness research, like the issue of LLM consciousness.

The point of this paper goes beyond its contents: the goal is to illustrate how a progressive approach to a science of consciousness might take place via painstakingly working out all the requirements theories of consciousness need to satisfy in order not to be a priori falsified and yet still remain non-trivial and scientifically informative. That this already rules out a great number of contemporary theories, gives us the structure of a disproof of LLM consciousness, and points us toward the capability of continual learning, is the kind of actually surprising information that indicates a progressive research program. 

Indeed, this allows us to define a precise target for theoretical consciousness research: a Minimal Non-trivial and Falsifiable Theory of Consciousness (MNFTC). An MNFTC is a theory of consciousness that, via satisfying continual lenient dependency, does not succumb to any of the logical problems around falsifiability and testability that the supermajority of contemporary theories fall into. Any universal substitutions are more like simulations of the properties that a theory's predictions are sensitive to. There is a large amount of future research to be done sketching out more formally what this progressive research program would consist of, delineating its boundaries, and also evaluating potential theories that actually fit those constraints, and connecting this to other work \cite{kleiner2025logic}. And while this paper focuses on learning as one option, it's worth noting there may be other cases of lenient dependency beyond continual learning that are more exotic (speculative alternatives include, say, observer effects in quantum mechanics being somehow immune to substitutions \cite{hameroff2014consciousness}). 

\subsection{Continual Learning as Central to Consciousness}

It is not a coincidence that LLMs, despite their intelligence, still feel like ``ghosts'' \cite{karpathy2025ghosts}. If consciousness is truly based on continual learning (or requires it for some further property), then it makes sense that the substitution of a frozen I/O echo of humans would be technically impressive but fail as our substitutions in ways that are hard to specify or quantify. Often science proceeds spurred on by metaphors from technology: LLMs being close substitutions to humans for many I/O text-based or screen-based tasks, but then also being somehow incomplete substitutions in ways difficult to quantify or specify, seems to preface the results herein about how I/O equivalence is a less valid target than first appears.

Currently, even the best contemporary LLMs struggle on long-term context and agential tasks, like playing Pokémon without a hand-holding scaffold \cite{angliss2025benchmark}. Despite all their impressive gains on benchmarks and in real-world usage, they are still extremely inefficient in their learning and this does have significant functional effects \cite{bell2025future}. This has led to many speculating about the necessity of some unknown type of continual learning as being necessary for true artificial general intelligence \cite{yildiz2024investigating}, as it is indisputable that even a human baby is orders of magnitude more data-efficient than an LLM \cite{warstadt2023findings, frank2023bridging}. If LLMs are indeed non-conscious, this too should tilt us toward theories of consciousness based on learning, since even far less intelligent animals are much more data-efficient in their learning than contemporary AI.

Of course, it is worth noting that there are some theories of consciousness that are already focused on learning and plasticity \cite{cleeremans2011radical, o2025delay}, including (depending on variant and interpretation) some popular theories based on minimizing prediction error and predictive processing that involve updating representations like world models \cite{friston2010free, seth2013interoceptive, clark2013whatever, safron2022integrated}. The results herein should make these prime candidates for exploration and testing. It is a matter of future research to what degree existing theories explicitly fit the requirements of continual lenient dependency, as some may be based only on past learning, rather than current learning.

Intriguingly, the disproof herein does not rule out LLM consciousness during their training in the same way it does for the deployed ``dead'' artificial neural network; however, nor does it guarantee their consciousness in any way (i.e., training might not be similar to the actual kind of continual learning involved in consciousness, nor be technically continual in practice), and so this remains an area for future research. Even if continual learning (which satisfies Proposition 5.1, the Continual Requirement for Lenient Dependency) were to be introduced in LLMs (see attempts such as, e.g., \cite{behrouz2024titans}), the results herein should give us pause. They indicate that non-conscious systems can produce highly elaborate claims to consciousness and this should considerably raise the bar of suspicion. Additionally, continual learning here is given as merely a necessary condition, not a sufficient one, and the results also indicate that LLM consciousness may be limited by degree of plasticity and how that plasticity is distributed (e.g., a small module that continually learns appended to an LLM would not necessarily render the LLM conscious of its conversations). At least when it comes to training, it should be noted that there are detectable differences between how humans vs. LLMs learn, already indicating that the learning process itself may differ significantly and so indicating that LLM consciousness may not hold during training \cite{pesnot2025shared}.

Finally, with regard to humans, a view of the biological brain as a gestalt being continuously updated by literally every experience is, at least, scientifically evocative, and not contradicted by the neuroscience of learning. E.g., place fields can develop in a single trial \cite{bittner2017behavioral}, and significant plasticity can occur over seconds \cite{chaloner2022multiple}; even the Overfitted Brain Hypothesis hypothesizes that dreams may have evolved to combat daily overfitting \cite{hoel2021overfitted}. It may indeed be that every experience must leave a trace for it to be an experience at all.

Consider the difference between how humans and LLMs conduct a conversation: for us humans, each conversation necessarily relies on continual learning to provide context about what has just been said. For the LLM, the whole conversation is stored verbatim externally and is continually being fed in the format of [input $(x, \text{history})$] as part of each new prompt itself. Humans, meanwhile, don't need to be prompted with the entire conversation just to add the next remark.

Therefore, it may well be that, from the beginning, evolution has been focused on continual learning in an organism's appropriate domain (its umwelt), rather than intelligence. This would explain our current uncertainty when high intelligence---at least on things like tests and benchmarks---does not automatically lead to consciousness.

\section{Acknowledgments}

Thank you to the supporters of Bicameral Labs; thanks as well to Johannes Kleiner and Abel Jansma for conversations and feedback. 

\bibliographystyle{unsrt}  %
\bibliography{references}  %

\end{document}